\newtheorem{theorem}{Theorem}[section]
\newtheorem{lemma}[theorem]{Lemma}
\theoremstyle{remark}
\newcommand{\D}{\mathrm{d}}
\newcommand{\tc}{{\widetilde{c}}}
\newcommand{\tK}{{\widetilde{K}}}
\newcommand{\ts}{{\widetilde{s}}}
\newcommand{\ty}{{\widetilde{y}}}
\newcommand{\talpha}{{\widetilde{\alpha}}}
\newcommand{\bD}{{\boldsymbol{D}}}
\newcommand{\bg}{{\boldsymbol{g}}}
\newcommand{\bk}{{\boldsymbol{k}}}
\newcommand{\bt}{{\boldsymbol{t}}}
\newcommand{\bu}{{\boldsymbol{u}}}
\newcommand{\bv}{{\boldsymbol{v}}}
\newcommand{\bx}{{\boldsymbol{x}}}
\newcommand{\bX}{{\boldsymbol{X}}}
\newcommand{\by}{{\boldsymbol{y}}}
\newcommand{\bz}{{\boldsymbol{z}}}
\newcommand{\balpha}{{\boldsymbol{\alpha}}}
\newcommand{\bDelta}{{\boldsymbol{\Delta}}}
\newcommand{\bkappa}{{\boldsymbol{\kappa}}}
\newcommand{\blambda}{{\boldsymbol{\lambda}}}
\newcommand{\bomega}{{\boldsymbol{\omega}}}
\newcommand{\tbw}{{\widetilde{\boldsymbol{w}}}}
\newcommand{\tby}{{\widetilde{\boldsymbol{y}}}}
\newcommand{\hf}{{\widehat{f}}}
\newcommand{\hmu}{{\widehat{\mu}}}
\newcommand{\hsigma}{{\widehat{\sigma}}}
\newcommand{\fu}{{\mathfrak{u}}}
\newcommand{\mA}{{\mathsf{A}}}
\newcommand{\mC}{{\mathsf{C}}}
\newcommand{\mF}{{\mathsf{F}}}
\newcommand{\mH}{{\mathsf{H}}}
\newcommand{\mi}{{\mathsf{i}}}
\newcommand{\mI}{{\mathsf{I}}}
\newcommand{\mk}{{\mathsf{k}}}
\newcommand{\mK}{{\mathsf{K}}}
\newcommand{\mS}{{\mathsf{S}}}
\newcommand{\mT}{{\mathsf{T}}}
\newcommand{\mv}{{\mathsf{v}}}
\newcommand{\mV}{{\mathsf{V}}}
\newcommand{\mx}{{\mathsf{x}}}
\newcommand{\my}{{\mathsf{y}}}
\newcommand{\mz}{{\mathsf{z}}}
\newcommand{\mDelta}{{\mathsf{\Delta}}}
\newcommand{\mLambda}{{\mathsf{\Lambda}}}
\newcommand{\mPi}{{\mathsf{\Pi}}}
\newcommand{\bmi}{{\boldsymbol{\mathsf{i}}}}
\newcommand{\bmz}{{\boldsymbol{\mathsf{z}}}}
\newcommand{\tmC}{{\widetilde{\mathsf{C}}}}
\newcommand{\calO}{{\mathcal{O}}}
\newcommand{\calP}{{\mathcal{P}}}
\newcommand{\calU}{{\mathcal{U}}}
\newcommand{\bbB}{{\mathbb{B}}}
\newcommand{\bbC}{{\mathbb{C}}}
\newcommand{\bbE}{{\mathbb{E}}}
\newcommand{\bbN}{{\mathbb{N}}}
\newcommand{\bbR}{{\mathbb{R}}}
\newcommand{\llvert}{\left\lvert}
\newcommand{\rrvert}{\right\rvert}
\newcommand{\bzero}{{\boldsymbol{0}}}
\newcommand{\simiid}{\overset{\mathrm{IID}}{\sim}}
\newcommand{\diag}{\mathrm{diag}}
\newcommand{\wal}{\mathrm{wal}}
\definecolor{darkgreen}{rgb}{0,0.6,0}
\lstdefinestyle{Python}{
    showstringspaces=false,
    language        = Python,
    basicstyle      = \small\ttfamily,
    morekeywords = {as},
    keywordstyle    = \color{blue},
    stringstyle     = \color{purple},
    commentstyle    = \color{darkgreen}\ttfamily,
    breaklines = true,
	postbreak=\text{$\hookrightarrow$\space},
	alsoletter = {>,.} ,
    morekeywords = [2]{>>>,...},
    keywordstyle = [2]\color{cyan}\bfseries}
\newcommand{\thisDelta}{0.227}
\newcommand{\odiv}{\mathrel{\ooalign{$\bigcirc$\cr\hidewidth$\div$\hidewidth}}}
\begin{document}

\begin{frontmatter}

\title{QMCPy: A Python package for randomized low-discrepancy sequences, quasi-Monte Carlo, and fast kernel methods}

\author[inst1,inst2]{Aleksei G.~Sorokin}
\ead{sorokin@uchicago.edu}

\address[inst1]{Department of Applied Mathematics, Illinois Institute of Technology, Chicago, USA}
\address[inst2]{Department of Statistics, University of Chicago, Chicago, USA}

\begin{abstract}
    Low-discrepancy (LD) sequences are widely used as efficient experimental designs for high-dimensional numerical integration and function approximation. This article presents \texttt{QMCPy}, an open-source Python library that provides a unified framework for randomized LD sequences, quasi-Monte Carlo (QMC) methods, and fast kernel-based computations. We systematically describe the supported rank-$1$ lattices, digital nets (including higher-order constructions), and Halton point sets, together with randomization techniques such as random shifts, linear matrix scrambling (LMS), nested uniform scrambling (NUS), digital shifts, and digital permutations. We emphasize practical implementation issues such as extensible sequence generation, Gray code ordering, and efficient digital operations. Beyond integration, \texttt{QMCPy} supports fast kernel methods in reproducing kernel Hilbert spaces (RKHSs) by pairing LD point sets with shift-invariant (SI) and digitally shift-invariant (DSI) kernels, including higher-order variants, which yields structured Gram matrices. In particular, the resulting Gram matrices have circulant or recursive symmetric block Toeplitz (RSBT) structure, allowing the costs of matrix-vector products and linear solves to be reduced from $\mathcal{O}(n^2)- \mathcal{O}(n^3)$ to $\mathcal{O}(n \log n)$ by using fast Fourier transforms (FFTs) and fast Walsh--Hadamard transforms (FWHTs). We derive a new computable form of an order-$4$ DSI kernel, develop efficient eigenvalue and transform-update algorithms, and present numerical experiments that demonstrate the accuracy, convergence rates, and computational efficiency of the implemented methods across a range of test integrands and dimensions. These capabilities in \texttt{QMCPy} provide a practical, reproducible platform for applying randomized QMC and kernel-based techniques in computational science and engineering.
\end{abstract}

\begin{keyword}
    quasi-Monte Carlo software \sep 
    low-discrepancy points \sep 
    higher-order digital net scrambling \sep
    shift-invariant kernels \sep 
    higher-order digitally-shift-invariant kernels \sep 
    fast Walsh--Hadamard transform
\end{keyword}

\end{frontmatter}

\section{Introduction} \label{sec:introduction}

\subsection{Background} \label{sec:background}

Low-discrepancy (LD) sequences, also called quasi-random sequences, are designed to judiciously explore the unit cube in high dimensions. They were first developed as replacements for independent samples in Monte Carlo simulation for high-dimensional numerical integration. The resulting quasi-Monte Carlo (QMC) methods have been shown to outperform independent Monte Carlo methods for a broad class of sufficiently regular integrands; see, for example, \citep{hlawka1961funktionen,niederreiter.qmc_book,dick.digital_nets_sequences_book,dick2022lattice,sloan1994lattice,dick.high_dim_integration_qmc_way,sobol1967distribution,halton1960efficiency,sloan1998quasi}. This advantage has been demonstrated in a variety of scientific and engineering applications, including financial modeling \citep{joy1996quasi,lai1998applications,l2004quasi,l2009quasi,xu2015high,giles.mlqmc_path_simulation}, elliptic partial differential equations (PDEs) with random coefficients \citep{graham2011quasi,kuo2012quasi,kuo2015multi,graham2015quasi,kuo.application_qmc_elliptic_pde,robbe.multi_index_qmc}, and graphics rendering via ray tracing \citep{jensen2003monte,raab2006unbiased,waechter2011quasi}, among others. Randomizations of LD sequences provide further advantages, such as enabling practical error estimation, avoiding boundary observations, and reducing the risk of pathological alignments with adverse integrands \citep{owen1995randomly,l2016randomized,lecuyer.RQMC_CLT_bootstrap_comparison,owen.variance_alternative_scrambles_digital_net,MATOUSEK1998527,tezuka2002randomization}.

Randomized LD sequences also play an important role as experimental designs in reproducing kernel Hilbert spaces (RKHSs). For suitable pairings of LD point sets and RKHS kernels, the corresponding Gram matrices exhibit structure that can be exploited by fast transforms \citep{zeng.spline_lattice_digital_net,zeng.spline_lattice_error_analysis}. For example, pairing a rank-$1$ lattice in linear order with a shift-invariant (SI) kernel yields a circulant Gram matrix diagonalizable by fast Fourier transforms (FFTs), while pairing a base-$2$ digital net with an appropriate digitally shift-invariant (DSI) kernel yields a recursive symmetric block Toeplitz (RSBT) structured Gram matrix diagonalizable by fast Walsh--Hadamard transforms (FWHTs). Such constructions have been used to accelerate kernel interpolation for PDEs with random coefficients \citep{kaarnioja.kernel_interpolants_lattice_rkhs,kaarnioja.kernel_interpolants_lattice_rkhs_serendipitous,sorokin.gp4darcy}, Bayesian cubature \citep{rathinavel.bayesian_QMC_lattice,rathinavel.bayesian_QMC_sobol,rathinavel.bayesian_QMC_thesis}, and worst-case error (WCE) computations \citep{hickernell.generalized_discrepancy_quadrature_error_bound,hickernell1998lattice}. LD point sets have also been investigated as training designs in scientific machine learning \citep{longo2021higher,chen2021quasi,keller2025regularity}, although we do not pursue this direction here.

Despite the maturity of the underlying theory, software support for randomized LD sequences and their use in both QMC and fast kernel methods remains fragmented. Existing implementations are often distributed across disjoint libraries or limited in their support for higher-order constructions, randomization schemes, and fast kernel methods. Our prior works \citep{choi.QMC_software,choi.challenges_great_qmc_software,hickernell.qmc_what_why_how,sorokin.MC_vector_functions_integrals,sorokin.qmcpy_joss} reviewed this software landscape and introduced \texttt{QMCPy}, an open-source Python framework for accessible QMC routines. Those works focused on the basic design and initial capabilities of \texttt{QMCPy} and on the broader context of QMC software.

The present article builds on that foundation and has a different emphasis: we provide a coherent mathematical and algorithmic description of the expanded suite of randomized LD sequences now supported in \texttt{QMCPy}, and we develop and implement fast kernel methods based on structured SI and DSI kernels paired with LD point sets. In doing so, we aim to bridge the gap between the rich theoretical literature on QMC and kernel methods and a practical, reproducible Python implementation that can be readily used in computational science and engineering.

\subsection{\texttt{QMCPy} architecture and implementation}

The features discussed in this article are available in \texttt{QMCPy} (\url{https://qmcsoftware.github.io/QMCSoftware/}) \citep{choi.QMC_software,choi.challenges_great_qmc_software,hickernell.qmc_what_why_how,sorokin.MC_vector_functions_integrals,sorokin.qmcpy_joss} version 2.0 or later, which may be installed from the Python package index (PyPI) using the command:

\lstinputlisting[style=Python]{pip_install_U_qmcpy.txt}

\noindent To follow along with the code snippets in this paper, we assume both \texttt{QMCPy} and \texttt{NumPy} \citep{NumPy.software} have been imported using the following Python code:

\lstinputlisting[style=Python, caption={Import the \texttt{QMCPy} and \texttt{NumPy} Python packages.}]{imports.py}

\noindent All code examples and plots are reproducible from a public notebook (\url{https://qmcsoftware.github.io/QMCSoftware/demos/talk_paper_demos/JCAM_Sorokin_2026/JCAM_Sorokin_2026/}). Most of the \texttt{QMCPy} functions presented here are high-level Python interfaces to efficient C implementations in the \texttt{QMCToolsCL} package (\url{https://qmcsoftware.github.io/QMCToolsCL/}). These low-level routines may also be used to build similar interfaces in other programming languages that support C extensions.

\subsection{\texttt{QMCPy} features}

Here we briefly summarize the main \texttt{QMCPy} capabilities relevant to this paper and provide pointers to related software when appropriate.
\begin{description}
    \item[Randomized LD sequences.] \texttt{QMCPy} supports the point sets and randomization routines described below. Subsets of these features are also supported in the comprehensive Java software \texttt{SSJ} \citep{lecuyer.ssj_software} and multi-language \texttt{Magic Point Shop (MPS)} (\url{https://people.cs.kuleuven.be/~dirk.nuyens/qmc-generators/}) \citep{kuo.application_qmc_elliptic_pde}.
    \begin{description}
        \item[Lattice points.] Rank-$1$ lattices may be generated in natural (radical inverse) or linear order, and may be randomized using shifts modulo one. Randomly shifted rank-$1$ lattices are also available in \texttt{MPS} and \texttt{GAIL} (\url{http://gailgithub.github.io/GAIL_Dev/}) (MATLAB's Guaranteed Automatic Integration Library) \citep{GAIL.software,hickernell2018monte}.
        \item[Digital nets.] Digital nets, including higher-order digital nets, may be generated in either natural (digit-reversal) or Gray code order, and may be randomized with linear matrix scrambling (LMS) \citep{owen.variance_alternative_scrambles_digital_net}, digital shifts, digital permutations, or nested uniform scramblings (NUS) \citep{owen1995randomly}. Early implementations of unrandomized digital sequences, including the Faure, Sobol', and Niederreiter--Xing constructions, can be found in \citet{fox1986algorithm}, \citet{bratley1992implementation}, \citet{bratley2003implementing}, and \citet{pirsic2002software}. Considerations for implementing scrambles were discussed by \citet{hong2003algorithm}. Support for combining LMS with digital shifts is also provided in MATLAB, \texttt{MPS}, and both the \texttt{PyTorch} \citep{PyTorch.software} and \texttt{SciPy} \citep{SciPy.software} Python packages.
        \item[Halton points.] As with digital nets, Halton point sets may be randomized with LMS, digital shifts, NUS, and/or permutation scrambles. The implementation of Halton point sets and randomizations have been treated by \citet{owen_halton,wang2000randomized}. The \texttt{QRNG} (\url{https://cran.r-project.org/web/packages/qrng/qrng.pdf}) (Quasi-Random Number Generators) R package \citep{qrng.software} implements generalized Halton point sets \citep{faure2009generalized} which use optimized digital permutation scrambles; these are also supported in \texttt{QMCPy}.
    \end{description}
    The C++ software \texttt{LatNet Builder} \citep{LatNetBuilder.software} as well as the Python software \texttt{QMC4PDE} (\url{https://people.cs.kuleuven.be/~dirk.nuyens/qmc4pde/}) \citep{kuo.application_qmc_elliptic_pde} both provide search routines for finding good lattice generating vectors and digital net generating matrices. \texttt{QMCPy} integrates with the new \textbf{\texttt{LDData}} repository (\url{https://github.com/QMCSoftware/LDData}) which contains a variety of these pregenerated vectors and matrices in standardized formats. \texttt{LDData} additionally includes popular choices from the websites of Frances Kuo on lattices (\url{https://web.maths.unsw.edu.au/~fkuo/lattice/index.html}) \citep{cools2006constructing,nuyens2006fast} and Sobol' points (\url{https://web.maths.unsw.edu.au/~fkuo/sobol/index.html}) (a special case of digital nets) \citep{joe2003remark,joe2008constructing}.
    \item[Variable transformations.] Variable transformations define the distribution of the underlying stochastic variables and automatically rewrite user-defined functions into QMC-compatible forms. \texttt{QMCPy} provides a collection of such transformations, largely wrapping distributions from \texttt{SciPy} \citep{SciPy.software}, so that users can specify integrals and expectations beyond the unit cube without manual change-of-variable calculations.  
    \item[Error estimators for (Q)MC and multilevel (Q)MC.] \texttt{QMCPy} provides several adaptive error estimation algorithms which automatically select the number of points required for a (Q)MC approximation to be within user-specified error tolerances.
    \begin{description}
        \item[Monte Carlo with independent points.] Confidence intervals are derived from either a central limit theorem (CLT) heuristic or a guaranteed version of CLT for functions with bounded kurtosis \citep{hickernell.MC_guaranteed_CI}.
        \item[QMC with multiple randomizations.] Student's-$t$ confidence intervals are formed from independent mean estimates from independent randomizations of an LD point set, following \citet{lecuyer.RQMC_CLT_bootstrap_comparison} and \citet[Chapter 17]{owen.mc_book_practical}. 
        \item[QMC via decay tracking (single LD sequence).]
        Guaranteed error bounds are available for cones of functions whose Fourier or Walsh coefficients decay at predictably rates \citep{hickernell.adaptive_dn_cubature,adaptive_qmc,cubqmclattice,ding2018adaptive}. 
        \item[QMC via Bayesian cubature (single LD sequence).] Posterior credible intervals on the integral of a Gaussian process may be computed efficiently by exploiting fast kernel computations \citep{rathinavel.bayesian_QMC_lattice,rathinavel.bayesian_QMC_sobol,rathinavel.bayesian_QMC_thesis}, as we will describe in this article.
        \item[Multilevel Monte Carlo.] Standard multilevel Monte Carlo (MLMC) \citep{giles.MLMC_path_simulation} and continuation MLMC algorithms \citep{collier2015continuation,robbe2016dimension,robbe2019recycling} are implemented for independent Monte Carlo sampling.  
        \item[Multilevel QMC with multiple randomizations.] Standard multilevel QMC (MLQMC) \citep{giles.mlqmc_path_simulation} and continuation MLQMC algorithms \citep{robbe.multi_index_qmc} are provided, using multiple randomized QMC sequences at each level. 
    \end{description}
    QMC error estimation is treated more broadly in \citet{owen.error_QMC_review}, while \citet{clancy2014cost} and \citet{adaptive_qmc} detail additional considerations for adaptive QMC algorithms. The single-level algorithms were also implemented in \texttt{GAIL}; \texttt{QMCPy} provides Python counterparts and extensions.
    \item[Fast kernel computations.] The second major focus of this paper is fast kernel-based computation using structured LD point sets. Matching a rank-$1$ lattice in natural order to a shift-invariant (SI) RKHS kernel yields a Gram matrix diagonalizable by the FFT and its inverse (IFFT). Similarly, matching a base-$2$ digital net in natural order to a digitally-shift-invariant (DSI) kernel yields a Gram matrix diagonalizable by the fast Walsh--Hadamard transform (FWHT) \citep{fino.fwht}. The  currently supported kernels and fast transforms are described below. Versions of these methods compatible with \texttt{PyTorch} \citep{PyTorch.software} are also maintained in order to enable GPU acceleration. 
    \begin{description}
        \item[SI and DSI kernels.] \texttt{QMCPy} provides SI kernels and DSI kernels, including higher-order smoothness variants. It also provides interfaces to commonly used kernels such as squared exponential and Mat\'ern kernels, which are also available in popular packages such as \texttt{GPyTorch} \citep{gardner.gpytorch_GPU_conjugate_gradient} and \texttt{scikit-learn} \citep{scikit-learn}. SI kernels of arbitrary integer smoothness based on Bernoulli polynomials are well known \citep{kaarnioja.kernel_interpolants_lattice_rkhs,kaarnioja.kernel_interpolants_lattice_rkhs_serendipitous,cools2021fast,cools2020lattice,sloan2001tractability,kuo2004lattice}. DSI kernels of order-$1$ smoothness were derived by \citet{dick.multivariate_integraion_sobolev_spaces_digital_nets}. \citet{baldeaux.polylat_efficient_comp_worse_case_error_cbc} derived expressions for general smoothness parameter $\alpha \geq 2$ and provided a $\calO(\alpha \#x)$ algorithm to compute them, where $\# x$ is the number of nonzero digits in the base-$b$ expansion of $x$. Building on this work, we derive an alternative computable form for the $\alpha=4$ DSI kernel (\Cref{thm:explicit_DSI_low_order_forms}), which involves a rapidly decaying infinite series and can therefore be evaluated efficiently in practice.
        \item[Fast transforms.] \texttt{QMCPy} provides interfaces to the FFT in bit-reversed order (FFTBR), its inverse IFFTBR, and the FWHT algorithms, each with $\calO(n \log n)$ complexity in the number of points $n$. FFTBR and IFFTBR are complexity-preserving wrappers around \texttt{SciPy}'s \citep{SciPy.software} FFT and IFFT functions where the additional bit-reversal step allows for greater compatibility with lattices in natural order. The FWHT is implemented directly and is significantly faster than the implementation in \texttt{SymPy} \citep{10.7717/peerj-cs.103} as of the time of writing; see \Cref{fig:timing}.
    \end{description}
\end{description}

\subsection{Contributions and scope}

This article focuses on equipping \texttt{QMCPy} with expanded support for randomized low-discrepancy (LD) sequences and new tools for fast kernel methods, while providing a unified mathematical and implementation framework. Compared to previous work on \texttt{QMCPy}, the main contributions are to provide:
\begin{description}
    \item[A broadened scope of randomized LD sequences.]
    We extend \texttt{QMCPy} to support a wider class of LD point sets and randomizations, including:
    \begin{itemize}
        \item higher-order digital nets constructed via digital interlacing,
        \item higher-order scrambling of digital nets using either linear matrix scrambling (LMS) or nested uniform scrambling (NUS), and
        \item Halton point sets with LMS- and NUS-based randomizations as well as digital permutations.
    \end{itemize}
    These features are integrated into a consistent Python interface, with careful attention to extensible sequence generation, Gray code ordering, and efficient digital operations.
    \item[Integration with the \texttt{LDData} repository.] \texttt{QMCPy} now interfaces with the \texttt{LDData} repository, which provides collections of rank-$1$ lattice generating vectors and digital net generating matrices in standardized formats. This integration enables users to access high-quality constructions (including higher-order variants) from the literature with minimal effort, and to combine them directly with \texttt{QMCPy}'s randomization and QMC routines.
    \item[SI and DSI kernels of varying smoothness.] 
    \texttt{QMCPy} implements families of shift-invariant (SI) and digitally shift-invariant (DSI) kernels of varying smoothness for use with rank-$1$ lattices and digital nets, respectively, enabling fast kernel-based computations in RKHSs. In particular, we build on existing Walsh-based constructions to provide an alternative computable form of a higher-order $\alpha=4$ DSI kernel (\Cref{thm:explicit_DSI_low_order_forms}) originally studied in \citep{baldeaux.HO_nets_RKHS}; the full derivation and its relation to \citet{baldeaux.polylat_efficient_comp_worse_case_error_cbc} are given in \Cref{sec:DSI_dnets}.
    \item [Fast transforms and efficient Gram matrix eigenvalue updates.]
    By pairing lattices with SI kernels and digital nets with DSI kernels, \texttt{QMCPy} exploits circulant and recursive symmetric block Toeplitz (RSBT) structure in the resulting Gram matrices. We provide an implementation of the fast Walsh--Hadamard transform (FWHT) as well as wrappers of \texttt{SciPy}'s \citep{SciPy.software} FFT and IFFT algorithms called FFTBR and IFFTBR which additionally perform bit-reversals for compatibility with lattices in natural order. Moreover, we develop practical algorithms to update eigenvalues and transformed vectors as the number of points is doubled. These tools reduce the cost of matrix-vector products and linear solves from $\calO(n^2)-\calO(n^3)$ to $\calO(n \log n)$ in $n$, while preserving extensibility of the underlying point sets.
    \item [Numerical validation and practical guidance.]
    We present numerical experiments that (1) compare the cost of generating various randomized LD sequences and applying fast transforms, and (2) study the root-mean-squared error (RMSE) convergence of randomized QMC estimators for a range of test integrands, including classical Genz functions and problems from uncertainty quantification. These results illustrate the strong empirical performance of higher-order digital nets with LMS and digital shifts, the benefits of pairing LD point sets with structured kernels, and the practical trade-offs among different randomizations. Together, they provide guidance on choosing point sets, randomizations, and kernels in applied settings.
\end{description}

Overall, this work positions \texttt{QMCPy} as a practical, reproducible platform that unifies randomized LD sequence generation, adaptive QMC methods, and fast kernel-based computations within a single, well-documented Python library.

\subsection{Outline} 

The remainder of this article is organized as follows. \Cref{sec:notation} gives common notations. \Cref{sec:motivating_problems} describes two motivating problems using low discrepancy sequences: QMC methods (\Cref{sec:QMC}) and fast kernel computations (\Cref{sec:kernel_computations}). \Cref{sec:randomized_LD_seqs} details randomized rank-$1$ lattices (\Cref{sec:lattices}), digital sequences (\Cref{sec:dnets}), and Halton sequences (\Cref{sec:Halton}), then provides a summary of convergence rates along with practical guidance (\Cref{sec:summary_convergence_practical_guidance}). \Cref{sec:fast_transforms} describes pairing SI kernels to rank-$1$ lattices (\Cref{sec:SI_lattices}), DSI kernels to digital nets (\Cref{sec:DSI_dnets}), and the resulting fast kernel computations (\Cref{sec:efficient_updates}). \Cref{sec:numerical_experiments} provides numerical experiments showcasing the speed, accuracy, and versatility of \texttt{QMCPy}. \Cref{sec:conclusion} gives a brief conclusion. 

\section{Notation} \label{sec:notation}

Bold symbols will denote vectors which are assumed to be column vectors, e.g., $\bx \in [0,1]^d$. Collections of $n \in \bbN_0 := \{0,1,2,\dots\}$ objects each in the set $\bbB$ will often be denoted by a vector, e.g., $\bomega \in \bbB^n$ is the vector collecting $n$ real numbers $\omega_i \in \bbB$. The set of all $n \times n'$ matrices with elements in $\bbB$ will be denoted by $\bbB^{n \times n'}$, and we will write matrices as capital letters in serif font, e.g., $\mK \in \bbR^{n \times n}$ is an $n \times n$ matrix. The complex conjugate will be denoted by a bar notation and is understood to act elementwise, e.g., $\mV \in \bbC^{n \times n}$ has complex conjugate $\overline{\mV}$. Lower case letters in serif font denote digits in a base-$b$ expansion, e.g., $i = \sum_{t=0}^{m-1} \mi_t b^t$ for $0 \leq i < b^m$. This may be combined with bold notation when denoting the vector of base-$b$ digits, e.g., $\bmi = (\mi_0,\mi_1,\dots,\mi_{m-1})^\intercal$. Modulo is always taken component-wise, e.g., rank-$1$ lattices will use the notation $\bx \bmod 1 = (x_1 \bmod 1,\dots,x_d \bmod 1)^\intercal$, and digital nets will take all matrix operations to be carried out modulo $b$. Permutations are often denoted by vectors, e.g., $\pi: \{0,1,2\} \to \{0,1,2\}$ with $\pi(0) = 2$, $\pi(1) = 0$, and $\pi(2)=1$ may be denoted by $\pi = (2,0,1)$. 

\section{Motivating problems} \label{sec:motivating_problems} 

\subsection{QMC methods for approximating integrals and expectations} \label{sec:QMC}

Monte Carlo (MC) and quasi-Monte Carlo (QMC) methods approximate a high-dimensional integral over the unit cube by the sample average of function evaluations at certain sampling locations:
\begin{equation}
    \mu := \int_{[0,1]^d} f(\bx) \D \bx \approx \frac{1}{n} \sum_{i=0}^{n-1} f(\bx_i) =: \hmu.
    \label{eq:mc_approx}
\end{equation}
Here $f: [0,1]^d \to \bbR$ is a given integrand and $\{\bx_i\}_{i=0}^{n-1} \subset [0,1]^d$ is a point set. The integral on the left-hand side may be viewed as taking the expectation $\bbE[f(\bX)]$ where $\bX$ is a standard uniform $\bX \sim \calU[0,1]^d$. MC methods choose the sampling locations to be independent and identically distributed (IID) $d$-dimensional standard uniforms $\bx_0,\dots,\bx_{n-1} \simiid \calU[0,1]^d$. IID MC methods for \eqref{eq:mc_approx} have a root-mean-squared-error (RMSE) of $\calO(n^{-1/2})$.

QMC methods \citep{niederreiter.qmc_book,dick.digital_nets_sequences_book,kroese.handbook_mc_methods,dick2022lattice,lemieux2009monte,sloan1994lattice,dick.high_dim_integration_qmc_way} replace IID point sets with LD point sets which more evenly cover the unit cube $[0,1]^d$. This greater uniformity leads to faster convergence of QMC methods compared to IID Monte Carlo methods for regular enough functions. Specifically, for integrands with bounded variation, plugging LD point sets into \eqref{eq:mc_approx} yields a worst-case error rate of $\calO(n^{-1+\delta})$ with $\delta>0$ arbitrarily small compared to the $\calO(n^{-1/2})$ worst-case error rate of IID Monte Carlo methods. 

Randomized quasi-Monte Carlo (RQMC) uses randomized LD point sets to give improved convergence rates and enable practical error estimation. Specifically, if we again assume the integrand has bounded variation, then RQMC methods using digital nets with NUS achieve a RMSE of $\calO(n^{-3/2+\delta})$. Moreover, let $\{\bx_{1i}\}_{i=0}^{n-1}, \dots, \{\bx_{Ri}\}_{i=0}^{n-1}$ denote $R$ IID randomizations of an LD point set where typically $R$ is small, e.g., $R=15$. Then, following  \citet{lecuyer.RQMC_CLT_bootstrap_comparison} and \citet[Chapter 17]{owen.mc_book_practical}, the RQMC estimate
\begin{equation}
    \hmu = \frac{1}{R} \sum_{r=1}^R \hmu_r \qquad \text{where} \quad \hmu_r = \frac{1}{n} \sum_{i=0}^{n-1} f(\bx_{ri})
    \label{eq:RQMC}
\end{equation}
gives a practical $100(1-\tau)\%$ confidence interval $\hmu \pm t_{R-1,\tau/2} \hsigma/\sqrt{R}$ for $\mu$ where $\hsigma^2 = (R-1)^{-1} \sum_{r=1}^R (\hmu_r - \hmu)^2$ and $t_{R-1,\tau/2}$ is the $\tau/2$ quantile of a Student's-$t$ distribution with $R-1$ degrees of freedom. \Cref{code:Genz_ex_1} will show how to code up such an estimate in \texttt{QMCPy}. For rank-$1$ lattices, randomization is typically done using random shifts modulo $1$. For digital nets, randomization is typically done using NUS or the cheaper combination of LMS with digital shifts and/or digit permutations \citep{MATOUSEK1998527,owen.variance_alternative_scrambles_digital_net,owen_halton,owen.gain_coefficients_scrambled_halton}.

Higher-order LD point sets were designed to yield faster convergence for integrands with additional smoothness. For integrands with square integrable mixed partial derivatives up to order $\alpha>1$, plugging higher-order digital nets into $\hmu$ in \eqref{eq:mc_approx} yields a worst-case error rate of $\calO(n^{-\alpha+\delta})$ \citep{dick.walsh_spaces_HO_nets,dick.qmc_HO_convergence_MCQMC2008,dick.decay_walsh_coefficients_smooth_functions}. RQMC using higher-order digital nets with higher-order NUS has been shown to achieve an RMSE of order $\calO(n^{-\alpha-1/2+\delta})$ \citep{dick.higher_order_scrambled_digital_nets}. Lattice points automatically achieve \(\calO(n^{-\alpha+\delta})\) convergence for periodic functions in dominating smoothness Sobolev spaces (Korobov spaces) \citep{dick2022lattice,hickernell1998lattice}. Detailed convergence rates of worst-case errors and RMSEs, both theoretical and empirical, are compared across randomized LD sequences in \Cref{sec:summary_convergence_practical_guidance} along with practical guidance for choosing which randomized LD sequence to use. 

\Cref{fig:point sets} plots some popular LD point sets including rank-$1$ lattices, base-$2$ digital nets (including higher-order versions), and Halton point sets. The plotted lattice is randomly shifted. The plotted digital nets and Halton point sets are randomized with linear matrix scramblings (LMS), digital shifts (DS), digital permutations (DP), and/or nested uniform scramblings (NUS). Digital interlacing of order $\alpha$ is used to construct higher-order randomized digital nets in base-$2$ (DN${}_\alpha$). 

\begin{figure}[ht]
    \centering
    \includegraphics[width=1\textwidth]{./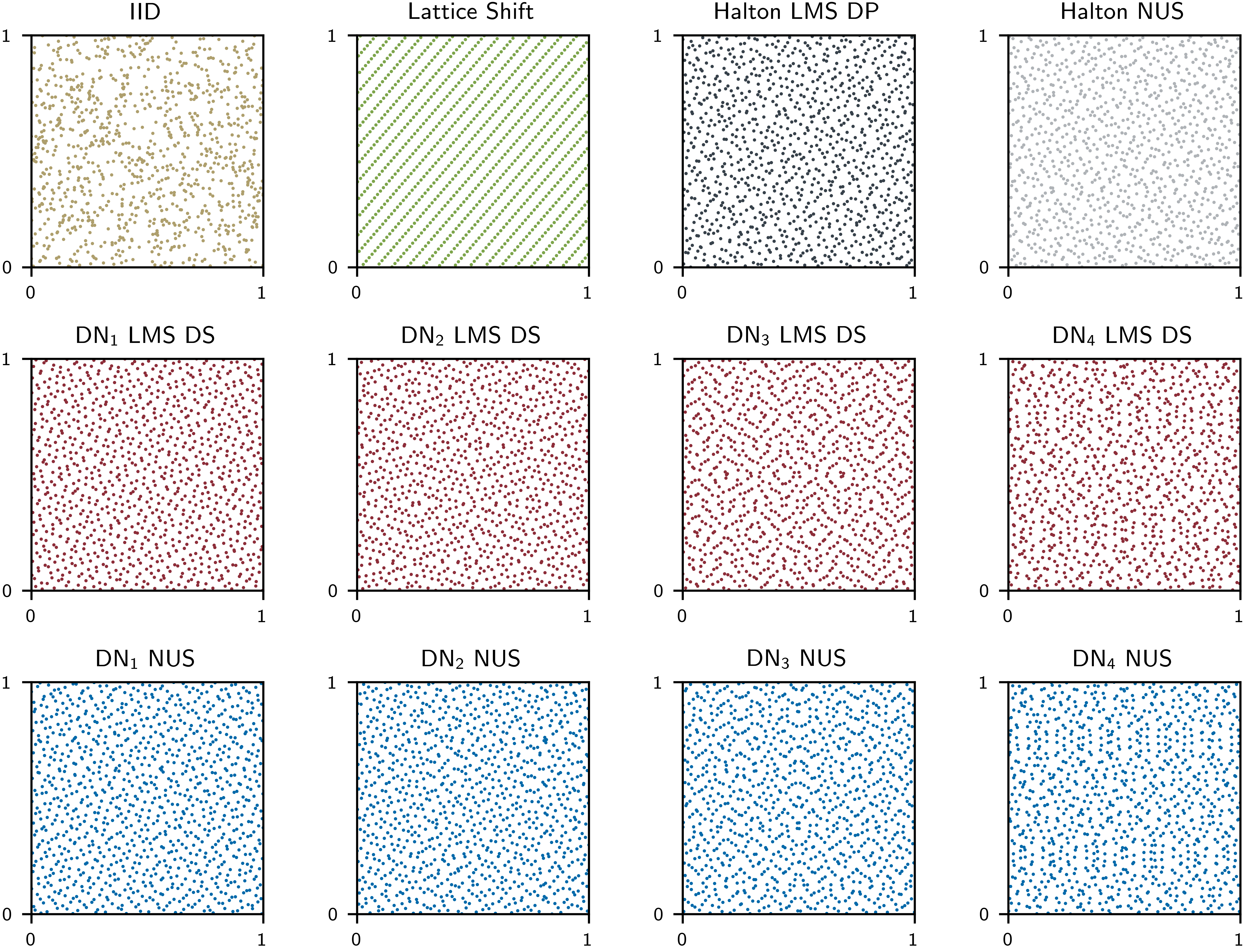}
    \caption{An independent identically distributed (IID) point set and low-discrepancy (LD) point sets of size $n=2^{10}=1024$.}
    \label{fig:point sets}
\end{figure}

\subsection{Kernelized worst-case error and kernel interpolation computations} \label{sec:kernel_computations}

Certain randomized LD point sets may be used to accelerate kernel-based computations. Let $\{\bx_i\}_{i=0}^{n-1}$ be a point set, $K: [0,1]^d \times [0,1]^d \to \bbR$ be a symmetric positive definite (SPD) kernel, and $\mK = \left(K(\bx_i,\bx_{i'})\right)_{i,i'=0}^{n-1}$ be the $n \times n$ SPD Gram matrix of kernel evaluations at all pairs of points. Two motivating kernel computations are described below:

\begin{description}
    \item[Kernelized worst-case error computations.] The error of the approximation \eqref{eq:mc_approx} can be bounded by the product of two terms: a measure of variation of the function $f$ and a measure of the discrepancy of the point set $\{\bx_i\}_{i=0}^{n-1}$. A classical example of this type of bound is the Koksma--Hlawka inequality \citep{hickernell.generalized_discrepancy_quadrature_error_bound,dick.high_dim_integration_qmc_way,hickernell1999goodness,niederreiter.qmc_book}. The discrepancy can be used to design low-discrepancy point sets. \citet{rusch2024message} present a newer idea in this area where good point sets are generated using neural networks trained with a discrepancy-based loss function.  
    
    Let us consider the more general setting of approximating the mean $\mu$ in \eqref{eq:mc_approx} by a weighted cubature scheme $\sum_{i=0}^{n-1} \omega_i f(\bx_i)$. If $f$ is assumed to lie in the RKHS $H$ with kernel $K$, then \citet{hickernell.generalized_discrepancy_quadrature_error_bound} showed that the squared worst-case error equals the generalized kernel discrepancy which takes the form
    $$\int \int K(\bu,\bv) \D \bu \D \bv - 2 \sum_{i=0}^{n-1} \omega_i \int K(\bu,\bx_i) \D \bu + \sum_{i,i'=0}^{n-1} \omega_i \omega_{i'} K(\bx_i,\bx_{i'})$$
    where integrals are understood to be over the unit cube domain $[0,1]^d$. 
    This worst-case error formula serves as the foundation for modern quasi-Monte Carlo (QMC) point construction, including fast component-by-component (CBC) construction algorithms in weighted and higher-order spaces \citep{kuo.application_qmc_elliptic_pde,nuyens2006fast,baldeaux.polylat_efficient_comp_worse_case_error_cbc,cools2021fast,kuo2003component,nuyens2006fast_nonprime,nuyens2004fast,kuo2025constructing,nuyens2014construction}. Evaluating the squared worst-case error formula above requires computing $\mK \bomega$ where $\bomega = \{\omega_i\}_{i=0}^{n-1}$. Moreover, this squared worst-case error is minimized for a given point set by setting the weights to $\bomega^\star = \mK^{-1} \bkappa$ where $\bkappa = \{\kappa_i\}_{i=0}^{n-1}$ collects $\kappa_i = \int_{[0,1]^d} K(\bu,\bx_i) \D \bu$ for $0 \leq i < n$.

    \item[Kernel interpolation computations.] Suppose we would like to approximate $f: [0,1)^d \to \bbR$ given observations $\by = \{y_i\}_{i=0}^{n-1}$ of $f$ at $\{\bx_i\}_{i=0}^{n-1}$ satisfying $y_i = f(x_i)$ for $0 \leq i < n$. Then a kernel interpolant approximates $f$ by $\hf \in H$ where 
    $$\hf(\bx) = \sum_{i=0}^{n-1} \omega_i K(\bx,\bx_i)$$
    and $\bomega = \mK^{-1} \by$. The above kernel interpolant may be reinterpreted as the posterior mean of a Gaussian process regression model. Fitting such a Gaussian process often includes optimizing a kernel's hyperparameters, which may also be done by computing  $\mK \bomega$ and $\mK^{-1} \by$ \citep{rasmussen.gp4ml}. 
\end{description}

Underlying these problems, and many others, is the requirement to compute the matrix-vector product $\mK \by$ or solve the linear system $\mK^{-1}\by$. The standard costs of these computations are $\calO(n^2)$ and $\calO(n^3)$ respectively. One method to reduce these high costs is to induce structure into $\mK$. \citet{zeng.spline_lattice_digital_net,zeng.spline_lattice_error_analysis} proposed structure-inducing methods which use certain kernels $K$ and points $\{\bx_i\}_{i=0}^{n-1}$ which enable both computations to be done at only $\calO(n \log n)$ cost. Two such pairings exist:
\begin{enumerate}
    \item When a rank-$1$ lattice point set $\{\bx_i\}_{i=0}^{n-1}$ is paired with a SI kernel, $\mK$ has a circulant structure and thus matrix multiplication and linear system solutions in $\mK$ may be computed by performing FFTs and IFFTs. For rank-$1$ lattices in natural order, the FFT and IFFT with bit-reversals, called FFTBR and IFFTBR respectively, are used. 
    \item When a base-$2$ digital net $\{\bx_i\}_{i=0}^{n-1}$ is paired with a DSI kernel, $\mK$ has a recursive symmetric block Toeplitz (RSBT) structure and thus matrix multiplication and linear system solutions in $\mK$ may be computed by only performing FWHTs.
\end{enumerate}


\section{Randomized low-discrepancy sequences} \label{sec:randomized_LD_seqs}

\subsection{Common definitions}

For a fixed base $b \in \bbN$, write $i \in \bbN_0$ as $i = \sum_{t=0}^\infty \mi_t b^t$ so $\mi_t$ is the $t^\mathrm{th}$ digit in the base-$b$ expansion of $i$. We denote the vector of the first $m$ base-$b$ digits of $i$ by
$$\bD_m(i) = (\mi_0,\mi_1,\dots,\mi_{m-1})^\intercal.$$
For $\bmi = \bD_m(i)$, we denote the digit-reversal of $i$ by
$$F_m(\bmi) = \sum_{t=1}^m \mi_{t-1} b^{-t} \in [0,1)$$
Finally, let
\begin{equation}
    v(i) = \sum_{t=1}^\infty \mi_{t-1} b^{-t}
    \label{eq:v}
\end{equation}
so that $v(i) = F_m(\bD_m(i))$ when $0 \leq i < b^m$, which is always the case in this article. The van der Corput sequence in base $b$ is $\{v(i)\}_{i \geq 0}$. 

\subsection{Rank-$1$ lattices} \label{sec:lattices}

Consider a fixed \emph{generating vector} $\bg \in \bbN^d$ and fixed prime base $b$. Then we define the \emph{lattice sequence} 
\begin{equation}
    \bz_i = v(i) \bg \bmod 1, \qquad i \geq 0.
    \label{eq:unrandomized_lattice}
\end{equation}
If $n=b^m$ for some $m \in \bbN_0$, then the lattice point set $\{\bz_i\}_{i=0}^{n-1} \subset [0,1)^d$ is equivalent to $\{i \bg/n \bmod 1\}_{i=0}^{n-1}$ where we say the former is in \emph{natural (radical inverse) order} while the latter is in \emph{linear order}. When $n$ is not of the form $b^m$, then linear and natural order will not generate the same lattice point set. Linear order assumes the number of points $n$ is fixed, while natural order is extensible in $n$, i.e., increasing $n$ through powers of $b$ simply extends the sequence without discarding previous points. On the other hand, linear order assumes $n$ is fixed, so increasing $n$ through powers of $b$ will not share points, e.g., in linear order with $b=2$ the first $4$ points when $n=4$ are different from the first $4$ points when $n=8$. Our implementation supports linear order for any base $b$, and natural order only for base $b=2$. 

\begin{description}
    \item[Shifted lattice] For a shift $\bDelta \in [0,1)^d$, we define the shifted point set 
    $$\bx_i= (\bz_i + \bDelta) \bmod 1$$ 
\end{description}
where $\bz_i$ denotes the unrandomized lattice point in \eqref{eq:unrandomized_lattice}. Randomized lattices use $\bDelta \sim \calU[0,1]^d$. In the following code snippet we generate $R$ independently shifted lattices with shifts $\bDelta_1,\dots,\bDelta_R \simiid \calU[0,1]^d$. 

\lstinputlisting[style=Python, caption={Generate randomized lattices.}, label={code:lattice}]{lattice.py}

\noindent Here we have used a generating vector from \citet{cools2006constructing} which is stored in a standardized format in the \texttt{LDData} repository. Other generating vectors from \texttt{LDData} may be used by passing in a file name from \url{https://github.com/QMCSoftware/LDData/tree/main/lattice/} or by passing in an explicit array. One may generate a subset of the sequence by passing in \texttt{n\_min} (inclusive) and \texttt{n\_max} (exclusive) parameters, e.g., \texttt{x = lattice(n\_min=8, n\_max=16)} will generate an array of shape \texttt{(16, 8, 52)} of points at indices $i=8,9,\dots,15$ in natural order. This method of generating subsequences will also apply for the digital nets example in \Cref{code:dnb2} and the Halton example in \Cref{code:Halton}.  

One often transforms lattices using the baker transform 
\begin{equation} \label{eq:baker}
    \bx \mapsto 1-2 \lvert \bx - 1/2 \rvert
\end{equation}
where the absolute value is applied elementwise. This periodizes the function and can improve convergence rates for non-periodic integrands; for functions with dominating mixed smoothness two, tent-transformed lattices achieve second-order convergence, See \Cref{sec:summary_convergence_practical_guidance} and the references therein for details and \Cref{sec:numerical_experiments} for numerical experiments in which the baker transform is used to achieve improved convergence. 

\subsection{Digital nets} \label{sec:dnets}

Consider a fixed prime base $b$ and \emph{generating matrices} $\mC_1,\dots,\mC_d \in \{0,...,b-1\}^{t_\mathrm{max} \times m}$ where $t_\mathrm{max} \in \bbN$ and $m \in \bbN$ are fixed and $b$ is a given prime base. Our implementation supports any prime base $b$, with extensions to theoretically-supported non-prime bases left as future work. The first $n=b^m$ points of a digital sequence form a \emph{digital net} $\{\bz_i\}_{i=0}^{b^m-1} \subset [0,1)^d$ where, in \emph{natural (digit-reversal) order}, for $1 \leq j \leq d$ and $0 \leq i < b^m$ with base-$b$ digit vector $\bmi = \bD_m(i)$, 
$$z_{ij} = F_{t_\mathrm{max}}(\bmz_{ij}) \qquad\text{with}\qquad \bmz_{ij} = \mC_j \bmi \bmod b \in \{0,\dots,b-1\}^{t_\mathrm{max}}.$$

\begin{description}
    \item[Digital shifts (DSs).] Similar to lattices, one may apply a shift $\mDelta \in [0,1)^{t_\mathrm{max} \times d}$ to the digital net to get a digitally-shifted digital net $\{\bx_i\}_{i=0}^{b^m-1}$ where 
    $$x_{ij} = F_{t_\mathrm{max}}((\bmz_{ij} + \bDelta_j ) \bmod b)$$
    and $\bDelta_j$ is the $j^\mathrm{th}$ column of $\mDelta$. Digital nets with random digital shifts use $\mDelta \simiid \calU\{0,\dots,b-1\}^{t_\mathrm{max} \times d}$ with IID uniform entries from $\{0,\dots,b-1\}$. 
    \item[Digital permutation (DPs).] In what follows we will denote permutations of $\{0,\dots,b-1\}$ by $\pi$. Suppose we are given a set of permutations
    $$\mPi = \{\pi_{j,t}: \quad 1 \leq j \leq d, \quad 0 \leq t < t_\mathrm{max}\}.$$
    Then we may construct the digitally permuted digital net $\{\bx_i\}_{i=0}^{b^m-1}$ where 
    $$x_{ij} = F_{t_\mathrm{max}}((\pi_{j,0}(\mz_{ij0}),\pi_{j,1}(\mz_{ij1}),\dots,\pi_{j,t_\mathrm{max}-1}(\mz_{ij(t_\mathrm{max}-1)}))^\intercal).$$
    Randomly permuted digital nets use independent permutations chosen uniformly over all permutations of $\{0,\dots,b-1\}$. 
    \item[Nested Uniform Scrambling (NUS).] NUS is often called Owen scrambling for its conception by \citet{owen1995randomly}. As before, $\pi$ denotes permutations of $\{0,\dots,b-1\}$. Now suppose we are given a set of permutations 
    $$\calP = \{\pi_{j,\mv_1\cdots\mv_t}: \, 1 \leq j \leq d, \, 0 \leq t < t_\mathrm{max}, \, \mv_k \in \{0,\dots,b-1\}, \, 0 \leq k \leq t \}.$$
    Then a \emph{nested uniform scrambling} of a digital net is $\{\bx_i\}_{i=0}^{b^m-1}$ where
    $$x_{ij} = F_{t_\mathrm{max}}\left(\begin{bmatrix} \pi_{j,}(\mz_{ij0}) \\ \pi_{j,\mz_{ij0}}(\mz_{ij1}) \\ \pi_{j,\mz_{ij0}\mz_{ij1}}(\mz_{ij2}) \\ \vdots \\ \pi_{j,\mz_{ij0}\mz_{ij1}\cdots\mz_{ij(t_\mathrm{max}-2)}}(\mz_{ij(t_\mathrm{max}-1)})\end{bmatrix}\right).$$ 
    As the number of elements in $\calP$ is  
    $$\lvert \calP \rvert = d(1+b+b^2+\dots+b^{t_\mathrm{max}-1}) = d \frac{b^{t_\mathrm{max}}-1}{b-1},$$ 
    NUS implementations cannot afford to generate all permutations a priori. In \texttt{QMCPy}, permutations are generated and stored only as needed. At the time of writing, this is accomplished with an adaptively built search tree where each node either stores the permutation for that digit if it has children or stores permutations for all remaining digits if it does not have children. As with digitally-permuted digital nets, NUS uses independent uniform random permutations.  
    \item[Linear matrix scrambling (LMS).] LMS is a computationally cheaper version of NUS which has empirically shown the same improved convergence rate of NUS for many practical problems. LMS uses scrambling matrices $\mS_1,\dots,\mS_d \in \{0,\dots,b-1\}^{t_\mathrm{max} \times t_\mathrm{max}}$ and sets the LMS generating matrices $\tmC_1,\dots,\tmC_d \in \{0,\dots,b-1\}^{t_\mathrm{max} \times m}$ so that 
    $$\tmC_j = \mS_j \mC_j \bmod b$$
    for $j=1,\dots,d$. Following \citet{owen.variance_alternative_scrambles_digital_net}, let us denote elements in $\{1,\dots,b-1\}$ by $h$ and elements in $\{0,\dots,b-1\}$  by $g$. Then common structures for $\mS_j$ include lower-triangular digit matrices with positive diagonals
    \begin{align*}
        &\begin{bmatrix} h_{11} & 0 & 0 & 0 & \dots \\ g_{21} & h_{22} & 0 & 0 & \dots \\ g_{31} & g_{32} & h_{33} & 0 & \dots \\ g_{41} & g_{42} & g_{43} & h_{44} & \dots \\ \vdots & \vdots & \vdots & \vdots & \ddots \end{bmatrix}, \\
        &\begin{bmatrix} h_1 & 0 & 0 & 0 & \dots \\ g_2 & h_1 & 0 & 0 & \dots \\ g_3 & g_2 & h_1 & 0 & \dots \\ g_4 & g_3 & g_2 & h_1 & \dots \\ \vdots & \vdots & \vdots & \vdots & \ddots \end{bmatrix}, \quad\text{and}\quad \\
        &\begin{bmatrix} h_1 & 0 & 0 & 0 & \dots \\ h_1 & h_2 & 0 & 0 & \dots \\ h_1 & h_2 & h_3 & 0 & \dots \\ h_1 & h_2 & h_3 & h_4 & \dots \\ \vdots & \vdots & \vdots & \vdots & \ddots \end{bmatrix}
    \end{align*}
    which corresponds respectively to Matou\v{s}ek's linear scrambling \citep{MATOUSEK1998527}, Tezuka's $i$-binomial scrambling \citep{tezuka2002randomization}, and Owen's striped LMS \citep{owen.variance_alternative_scrambles_digital_net} (not to be confused with NUS which is often called Owen scrambling). These three LMS methods choose $g$ and $h$ values all independently and uniformly from $\{1,\dots,b-1\}$ and $\{0,\dots,b-1\}$ respectively. \citet{owen.variance_alternative_scrambles_digital_net} analyzes these scramblings and their connection to NUS.
    \item[Digital interlacing for higher-order nets.] Digital interlacing enables the construction of higher-order digital nets. For integer order $\alpha \geq 1$, interlacing $\mA_1,\mA_2,\dots \in \{0,\dots,b-1\}^{t_\mathrm{max} \times m}$ gives $\widehat{\mA}_1,\widehat{\mA}_2,\dots \in \{0,\dots,b-1\}^{\alpha t_\mathrm{max} \times m}$ satisfying $\widehat{\mA}_{jtk} = A_{\widehat{j},\widehat{t},k}$ where $\widehat{j} = \alpha (j-1) + (t \bmod \alpha) + 1$ and $\widehat{t} = \lfloor t / \alpha \rfloor$ for $j \geq 1$ and $0 \leq t < \alpha t_\mathrm{max}$ and $1 \leq k \leq m$. Here $\widehat{\mA}_{jtk}$ is the element in row $t$ and column $k$ of matrix $\widehat{\mA}_j$ and similarly for $\mA_{\widehat{j},\widehat{t},k}$. For example, with $m=2$, $t_\mathrm{max}=2$, and $\alpha=2$ we have
    $$\mA_1 = \begin{bmatrix} a_{101} & a_{102} \\ a_{111} & a_{112} \end{bmatrix}, \quad \mA_2 = \begin{bmatrix} a_{201} & a_{202} \\ a_{211} & a_{212} \end{bmatrix}, \quad \longrightarrow\quad  \widehat{\mA}_1 = \begin{bmatrix} a_{101} & a_{102} \\ a_{201} & a_{202} \\ a_{111} & a_{112} \\ a_{211} & a_{212} \end{bmatrix}$$
    $$\mA_3 = \begin{bmatrix} a_{301} & a_{302} \\ a_{311} & a_{312} \end{bmatrix}, \quad \mA_4 = \begin{bmatrix} a_{401} & a_{402} \\ a_{411} & a_{412} \end{bmatrix} \quad\longrightarrow\quad  \widehat{\mA}_2 = \begin{bmatrix} a_{301} & a_{302} \\ a_{401} & a_{402} \\ a_{311} & a_{312} \\ a_{411} & a_{412}\end{bmatrix}$$
    where $a_{jtk} \in \{0,\dots,b-1\}$. Without scrambling, higher-order digital nets may be directly generated from $\widehat{\mC}_1,\dots,\widehat{\mC}_d$, the interlaced generating matrices resulting from interlacing $\mC_1,\dots,\mC_{\alpha d}$. Higher-order NUS \citep{dick.higher_order_scrambled_digital_nets} requires generating a digital net from $\mC_1,\dots,\mC_{\alpha d}$, applying NUS to the resulting $\alpha d$-dimensional digital net, and then interlacing the resulting digit matrices $\{\bmz_{i,1}^\intercal\}_{i=0}^{b^m-1},\dots,\{\bmz_{i,\alpha d}^\intercal\}_{i=0}^{b^m-1} \in \{0,\dots,b-1\}^{t_\mathrm{max} \times m}$. For higher-order LMS with interlacing, one applies LMS to the generating matrices $\mC_1,\dots,\mC_{\alpha d}$ to get $\tmC_1,\dots,\tmC_{\alpha d}$, then interlaces $\tmC_1,\dots,\tmC_{\alpha d}$ to get $\widehat{\tmC}_1,\dots,\widehat{\tmC}_d$, then generates the digital net from $\widehat{\tmC}_1,\dots,\widehat{\tmC}_d$. As we show in the numerical experiments in \Cref{sec:numerical_experiments}, LMS is significantly faster than NUS (especially for higher-order nets) while still achieving higher-order rates of RMSE convergence. 
\end{description}

A subtle difference between the above presentation and practical implementation is that $t_\mathrm{max}$ may change with randomization in practice. For example, suppose we are given generating matrices $\mC_1,\dots,\mC_d \in \{0,\dots,b-1\}^{32 \times32}$ but would like the shifted digital net to have $64$ digits of precision. Then we should generate $\bDelta \in \{0,\dots,b-1\}^{t_\mathrm{max} \times d}$ with $t_\mathrm{max}=64$ and treat $\mC_j$ as $t_\mathrm{max} \times t_\mathrm{max}$ matrices with appropriate trailing rows and columns zeroed out. 

Gray code ordering of digital nets enables computing the next point $\bx_{i+1}$ from $\bx_i$ by only adding a single column from each generating matrix. Specifically, the $q^\mathrm{th}$ column of each generating matrix gets digitally added to the previous point where $q-1$ is the index of the only digit to be changed in Gray code ordering, i.e., in Gray code order only one digit is incremented or decremented by $1$ (modulo $b$) when $i$ is incremented by $1$. Gray code orderings for $b=2$ and $b=3$ are shown in \Cref{tab:Gray code}. 

\begin{table}[htbp]
    \centering
    \begin{tabular}{r | r l | r l }
        $i$ & $i_2$ & Gray code  $i_2$ & $i_3$ & Gray code $i_3$ \\
        \hline 
        $0$ & $0000_2$ & $0000_2$ & $00_3$ & $00_3$ \\
        $1$ & $0001_2$ & $0001_2$ & $01_3$ & $01_3$ \\ 
        $2$ & $0010_2$ & $0011_2$ & $02_3$ & $02_3$ \\ 
        $3$ & $0011_2$ & $0010_2$ & $10_3$ & $12_3$ \\ 
        $4$ & $0100_2$ & $0110_2$ & $11_3$ & $11_3$ \\ 
        $5$ & $0101_2$ & $0111_2$ & $12_3$ & $10_3$ \\ 
        $6$ & $0110_2$ & $0101_2$ & $20_3$ & $20_3$ \\ 
        $7$ & $0111_2$ & $0100_2$ & $21_3$ & $21_3$ \\ 
        $8$ & $1000_2$ & $1000_2$ & $22_3$ & $22_3$
    \end{tabular}
    \caption{Gray code order for bases $b=2$ and $b=3$.}
    \label{tab:Gray code} 
\end{table}

The following code generates $\alpha=2$ higher-order digital nets in base-$2$ with $R$ independent LMS and digital shift (DS) combinations. 

\lstinputlisting[style=Python, caption={Generate randomized higher-order digital nets in base $b=2$.}, label={code:dnb2}]{dnb2.py}

\noindent Here we have used a set of Sobol' generating matrices from \citet{joe2008constructing}, which are stored in a standardized format in the \texttt{LDData} repository. Specifically, we use the ``new-joe-kuo-6.21201'' direction numbers from \url{https://web.maths.unsw.edu.au/~fkuo/sobol/index.html}. Other generating matrices from \texttt{LDData} may be used by passing in a file name from \url{https://github.com/QMCSoftware/LDData/blob/main/dnet/} or by passing an explicit array.

Our \texttt{QMCPy} implementation also supports digital nets with prime bases other than $2$. For example, the following code generates a Faure sequence with $R$ independent LMS and digital permutation (DP) combinations. Faure sequences use a prime base equal to the smallest prime greater than or equal to the dimension, e.g., the $d=52$ dimensional Faure sequence uses prime base $b=53$ in the following setup. 

\lstinputlisting[style=Python, caption={Generate randomized Faure points in base $b=53$.}, label={code:faure}]{faure.py}

\subsection{Halton sequences} \label{sec:Halton}

The digital sequences described in the previous section used a fixed prime base $b$. One may allow each dimension $j \in \{1,\dots,d\}$ to have its own prime base $b_j$. The most popular of such constructions is the Halton sequence which sets $b_j$ to the $j^\mathrm{th}$ prime, sets $\mC_j$ to the identity matrix, and sets $t_\mathrm{max} = m$. This enables the simplified construction of Halton points $\{\bx_i\}_{i=0}^{n-1}$ via
$$\bx_i = (v_{b_1}(i),\dots,v_{b_d}(i))^\intercal$$
where we have added a subscript to $v$ in \eqref{eq:v} to denote the base dependence. 

Almost all the methods described for digital sequences are immediately applicable to Halton sequences after accounting for the differing bases across dimensions. However, digital interlacing is not generally applicable when the bases differ. Halton with random starting points has also been explored by \citet{wang2000randomized}, although we do not consider this here.
The following code generates a Halton point set with $R$ independent LMS and digital permutation combinations. Specifically, we generate $R$ independent LMS-Halton point sets and then apply an independent permutation scramble to each.

\lstinputlisting[style=Python, caption={Generate randomized Halton points.}, label={code:Halton}]{halton.py}

\noindent The \texttt{"QRNG"} randomization option  follows the \texttt{QRNG} software package \citep{qrng.software} in generating a generalized Halton point set \citep{faure2009generalized} which uses a deterministic set of  permutation scrambles and random digital shifts. 

\subsection{Summary of randomizations, convergence rates, and practical guidance} \label{sec:summary_convergence_practical_guidance}

\begin{figure}[htbp]
    \centering
    \newcommand{\scale}{5}
    \newcommand{\axsep}{1.5}
    \begin{subfigure}[t]{.2\textwidth}
    \begin{tikzpicture}
        \draw (0,\scale+1) node{$x$};
        \draw (\axsep,\scale+1) node{$z$};
        \draw[-,color=lightgray] (0,0) -- (\axsep,0);
        \draw[-,color=lightgray] (0,\scale) -- (\axsep,\scale);
        \draw[-] (0,0) -- (0,\scale);
        \draw[-] (\axsep,0) -- (\axsep,\scale); 
        \draw[-] (0,0) -- (.4,0);
        \draw[-] (0,\scale) -- (.4,\scale);
        \draw[-] (\axsep-.4,0) -- (\axsep,0);
        \draw[-] (\axsep-.4,\scale) -- (\axsep,\scale);
        \draw (\axsep/2,\scale+2) node{lattice + shift};
        \draw (1,\scale-\scale*\thisDelta) node{$1-\Delta$};
        \draw (\axsep-1,\scale*\thisDelta) node{$\Delta$};
        \draw[-] (0,\scale-\scale*\thisDelta) -- (.4,\scale-\scale*\thisDelta);
        \draw[-] (\axsep-.4,\scale*\thisDelta) -- (\axsep,\scale*\thisDelta);
        \draw[->] (0,\scale/2-\scale*\thisDelta/2) -- (\axsep,\scale-\scale/2+\scale*\thisDelta/2);
        \draw[-] (0,\scale-\scale*\thisDelta/2)-- (\axsep/2,\scale);
        \draw[->] (\axsep/2,0)-- (\axsep,\scale*\thisDelta/2);
    \end{tikzpicture}
    \end{subfigure}
    \begin{subfigure}[b]{.25\textwidth}
    \begin{tikzpicture}
        \draw (0,\scale+1) node{$x$};
        \draw (2*\axsep,\scale+1) node{$z$};
        \draw[-] (0,0) -- (0,\scale);
        \draw[-] (\axsep,0) -- (\axsep,\scale); 
        \draw[-] (2*\axsep,0) -- (2*\axsep,\scale); 
        \draw[-,color=lightgray] (0,0) -- (\axsep,0);
        \draw[-,color=lightgray] (0,\scale) -- (\axsep,\scale);
        \foreach \x in {0,...,3}{
            \draw[-,color=lightgray] (\axsep,\scale*\x/3) -- (2*\axsep,\scale*\x/3);
            \draw[-] (0,\scale*\x/3) -- (.4,\scale*\x/3);
            \draw[-] (\axsep-.4,\scale*\x/3) -- (\axsep+.4,\scale*\x/3);
            \draw[-] (2*\axsep-.4,\scale*\x/3) -- (2*\axsep,\scale*\x/3);
        }
        \foreach \x in {0,...,9}{
            \draw[-] (\axsep,\scale*\x/9) -- (\axsep+.2,\scale*\x/9);
            \draw[-] (2*\axsep-.2,\scale*\x/9) -- (2*\axsep,\scale*\x/9);
        }
        \draw (\axsep,\scale+2) node{digital net + DS};
        \draw[->] (0,\scale*1.5/9)-- (\axsep,\scale*7.5/9);
        \draw[-] (0,\scale*4.5/9)-- (\axsep*3/4,\scale);
        \draw[->] (\axsep*3/4,0)-- (\axsep,\scale*1.5/9);
        \draw[-] (0,\scale*7.5/9)-- (\axsep/4,\scale);
        \draw[->] (\axsep/4,0)-- (\axsep,\scale*4.5/9);
        \draw[->] (\axsep,\scale/18) -- (2*\axsep,\scale*3/18);
        \draw[->] (\axsep,\scale*3/18) -- (2*\axsep,\scale*5/18);
        \draw[-] (\axsep,\scale*5/18) -- (\axsep+\axsep/2,\scale/3);
        \draw[->] (\axsep+\axsep/2,0) -- (2*\axsep,\scale/18);
        \draw[->] (\axsep,\scale/3+\scale/18) -- (2*\axsep,\scale/3+\scale*3/18);
        \draw[->] (\axsep,\scale/3+\scale*3/18) -- (2*\axsep,\scale/3+\scale*5/18);
        \draw[-,densely dotted] (\axsep,\scale/3+\scale*5/18) -- (\axsep+\axsep/2,\scale/3+\scale/3);
        \draw[->,densely dotted] (\axsep+\axsep/2,\scale/3) -- (2*\axsep,\scale/3+\scale/18);
        \draw[->] (\axsep,\scale*2/3+\scale/18) -- (2*\axsep,\scale*2/3+\scale*3/18);
        \draw[->] (\axsep,\scale*2/3+\scale*3/18) -- (2*\axsep,\scale*2/3+\scale*5/18);
        \draw[-] (\axsep,\scale*2/3+\scale*5/18) -- (\axsep+\axsep/2,\scale*2/3+\scale/3);
        \draw[->] (\axsep+\axsep/2,\scale*2/3) -- (2*\axsep,\scale*2/3+\scale/18);
    \end{tikzpicture}
    \end{subfigure}
    \begin{subfigure}[b]{.25\textwidth}
    \begin{tikzpicture}
        \draw (0,\scale+1) node{$x$};
        \draw (2*\axsep,\scale+1) node{$z$};
        \draw[-] (0,0) -- (0,\scale);
        \draw[-] (\axsep,0) -- (\axsep,\scale); 
        \draw[-] (2*\axsep,0) -- (2*\axsep,\scale);
        \draw[-,color=lightgray] (0,0) -- (\axsep,0);
        \draw[-,color=lightgray] (0,\scale) -- (\axsep,\scale); 
        \foreach \x in {0,...,3}{
            \draw[-,color=lightgray] (\axsep,\scale*\x/3) -- (2*\axsep,\scale*\x/3);
            \draw[-] (0,\scale*\x/3) -- (.4,\scale*\x/3);
            \draw[-] (\axsep-.4,\scale*\x/3) -- (\axsep+.4,\scale*\x/3);
            \draw[-] (2*\axsep-.4,\scale*\x/3) -- (2*\axsep,\scale*\x/3);
        }
        \foreach \x in {0,...,9}{
            \draw[-] (\axsep,\scale*\x/9) -- (\axsep+.2,\scale*\x/9);
            \draw[-] (2*\axsep-.2,\scale*\x/9) -- (2*\axsep,\scale*\x/9);
        }
        \draw (\axsep,\scale+2) node{digital net + DP};
        \draw[->] (0,\scale*1/6) -- (\axsep,\scale*5/6);
        \draw[->] (0,\scale*3/6) -- (\axsep,\scale*1/6);
        \draw[->] (0,\scale*5/6) -- (\axsep,\scale*3/6);
        \draw[->] (\axsep,\scale*1/18) -- (2*\axsep,\scale*5/18);
        \draw[->] (\axsep,\scale*3/18) -- (2*\axsep,\scale*3/18);
        \draw[->] (\axsep,\scale*5/18) -- (2*\axsep,\scale*1/18);
        \draw[->] (\axsep,\scale*7/18) -- (2*\axsep,\scale*11/18);
        \draw[->] (\axsep,\scale*9/18) -- (2*\axsep,\scale*9/18);
        \draw[->] (\axsep,\scale*11/18) -- (2*\axsep,\scale*7/18);
        \draw[->] (\axsep,\scale*13/18) -- (2*\axsep,\scale*17/18);
        \draw[->] (\axsep,\scale*15/18) -- (2*\axsep,\scale*15/18);
        \draw[->] (\axsep,\scale*17/18) -- (2*\axsep,\scale*13/18);
    \end{tikzpicture}
    \end{subfigure}
    \begin{subfigure}[b]{.25\textwidth}
    \begin{tikzpicture}
        \draw (0,\scale+1) node{$x$};
        \draw (2*\axsep,\scale+1) node{$z$};
        \draw[-] (0,0) -- (0,\scale);
        \draw[-] (\axsep,0) -- (\axsep,\scale); 
        \draw[-] (2*\axsep,0) -- (2*\axsep,\scale); 
        \draw[-,color=lightgray] (0,0) -- (\axsep,0);
        \draw[-,color=lightgray] (0,0+\scale) -- (\axsep,0+\scale);
        \foreach \x in {0,...,3}{
            \draw[-,color=lightgray] (\axsep,\scale*\x/3) -- (2*\axsep,\scale*\x/3);
            \draw[-] (0,\scale*\x/3) -- (.4,\scale*\x/3);
            \draw[-] (\axsep-.4,\scale*\x/3) -- (\axsep+.4,\scale*\x/3);
            \draw[-] (2*\axsep-.4,\scale*\x/3) -- (2*\axsep,\scale*\x/3);
        }
        \foreach \x in {0,...,9}{
            \draw[-] (\axsep,\scale*\x/9) -- (\axsep+.2,\scale*\x/9);
            \draw[-] (2*\axsep-.2,\scale*\x/9) -- (2*\axsep,\scale*\x/9);
        }
        \draw (\axsep,\scale+2) node{digital net + NUS};
        \draw[->] (0,\scale*1/6) -- (\axsep,\scale*5/6);
        \draw[->] (0,\scale*3/6) -- (\axsep,\scale*1/6);
        \draw[->] (0,\scale*5/6) -- (\axsep,\scale*3/6);
        \draw[->] (\axsep,\scale*1/18) -- (2*\axsep,\scale*1/18);
        \draw[->] (\axsep,\scale*3/18) -- (2*\axsep,\scale*3/18);
        \draw[->] (\axsep,\scale*5/18) -- (2*\axsep,\scale*5/18);
        \draw[->] (\axsep,\scale*7/18) -- (2*\axsep,\scale*9/18);
        \draw[->] (\axsep,\scale*9/18) -- (2*\axsep,\scale*11/18);
        \draw[->] (\axsep,\scale*11/18) -- (2*\axsep,\scale*7/18);
        \draw[->] (\axsep,\scale*13/18) -- (2*\axsep,\scale*17/18);
        \draw[->] (\axsep,\scale*15/18) -- (2*\axsep,\scale*15/18);
        \draw[->] (\axsep,\scale*17/18) -- (2*\axsep,\scale*13/18);
    \end{tikzpicture}
    \end{subfigure}
    \caption{Low-discrepancy randomization routines in dimension $d=1$.} 
    \label{fig:ld_randomizations}
\end{figure}

\begin{table}[htbp]
    \centering
    \resizebox{\textwidth}{!}{%
    \begin{tabular}{lllll}
        sequence / randomization & WCE & RMSE (theory) & RMSE (empirical) & references \\
        \hline 
        rank-$1$ lattice (unrandomized) & $\calO(n^{-\alpha+\delta})$ & -- & -- & \citep{dick.higher_order_scrambled_digital_nets,sloan2001tractability,kuo2004lattice,dick2022lattice,sloan1994lattice} \\
        rank-$1$ lattice + random shift  & $\calO(n^{-\alpha+\delta})$ & $\calO(n^{-\alpha+\delta})$ & $\calO(n^{-\alpha+\delta})$ & \citep{dick.higher_order_scrambled_digital_nets, hickernell1998lattice,l2016randomized, dick2022lattice,sloan1994lattice} \\
        digital net (unrandomized) & $\calO(n^{-\alpha+\delta})$ & -- & -- & \citep{niederreiter.qmc_book,dick.digital_nets_sequences_book,dick.higher_order_scrambled_digital_nets,hickernell.generalized_discrepancy_quadrature_error_bound,dick.walsh_spaces_HO_nets,dick.decay_walsh_coefficients_smooth_functions,baldeaux.polylat_efficient_comp_worse_case_error_cbc} \\
        digital net + DS / DP & $\calO(n^{-\alpha+\delta})$ & $\calO(n^{-\alpha+\delta})$ & $\calO(n^{-\alpha+\delta})$ & \citep{niederreiter.qmc_book,dick.digital_nets_sequences_book,hong2003algorithm} \\
        digital net + LMS & $\calO(n^{-\alpha+\delta})$ & $\calO(n^{-\alpha+\delta})$ & $\calO(n^{-\alpha-1/2+\delta})$ & \citep{MATOUSEK1998527,owen.variance_alternative_scrambles_digital_net,tezuka2002randomization} \\
        digital net + NUS & $\calO(n^{-\alpha+\delta})$ & $\calO(n^{-\alpha-1/2+\delta})$ & $\calO(n^{-\alpha-1/2+\delta})$ & \citep{owen1995randomly,dick.digital_nets_sequences_book,l2016randomized,owen.mc_book_practical} \\
        Halton (unrandomized) & $\calO(n^{-1+\delta})$ & -- & -- & \citep{halton1960efficiency,faure2009generalized,lemieux2009monte} \\
        Halton + DS / DP / LMS & $\calO(n^{-1+\delta})$ & $\calO(n^{-1+\delta})$ & $\calO(n^{-1+\delta})$ & \citep{wang2000randomized,owen_halton,qrng.software}
    \end{tabular}  
    }
    \caption{Summary of worst-case error (WCE) and root-mean-squared error (RMSE)
    for various LD point sets and randomizations.}
    \label{tab:wce_rmse_summary}
\end{table}

\Cref{fig:ld_randomizations} depicts many of these randomization routines. In the figure, each vertical line is a unit interval with $0$ at the bottom and $1$ at the top. A given interval is partitioned at the horizontal ticks extending to the right, and then rearranged following the arrows to create the partition of the right interval as shown by the ticks extending to the left. For the shifted rank-$1$ lattice and digital net with a digital shift (DS), when the arrow hits a horizontal gray bar it is wrapped around to the next gray bar below. See for example the dotted line in the digitally shifted digital net. The lattice shift is $\Delta = \thisDelta$. All digital nets use base $b=3$ and $t_\mathrm{max}=2$ digits of precision. The digital shift is $\bDelta = (2,1)^\intercal$. Dropping the $j$ subscript for dimension, the digital permutations in the third panel are $\pi_1 = (2,0,1)$ and $\pi_2 = (2,1,0)$. Notice $\pi_1$ is equivalent to a digital shift by $2$, but $\pi_2$ cannot be written as a digital shift. The nested uniform scramble (NUS) has digital permutations $\pi = (2,0,1)$, $\pi_0 = (2,1,0)$, $\pi_1 = (0,1,2)$, and $\pi_2 = (1,2,0)$. Notice the permuted digital net in the third panel has permutations depending only on the current digit in the base-$b$ expansion. In contrast, the full NUS scrambling in the fourth panel has permutations which depend on all previous digits in the base-$b$ expansion. 

\Cref{tab:wce_rmse_summary} summarizes the worst-case error (WCE) and root-mean-squared error (RMSE) for the unrandomized and randomized LD sequences discussed in this section. In the table $\delta>0$ is an arbitrarily small constant (which may differ from row to row) used to hide logarithmic factors. The parameter $\alpha \geq 1$ is a smoothness parameter with $\alpha>1$ referred to as higher-order convergence, hence the terminology ``higher-order digital nets''. DS stands for digital shifts, DP for digital permutations, LMS for (Matou\v{s}ek's) linear matrix scrambling, and NUS for nested uniform scrambling. 

Lattices are typically studied in weighted periodic Korobov / Sobolev RKHSs of dominating mixed smoothness with the shift invariant (SI) $\alpha$-kernel defined in \Cref{sec:SI_lattices}. For non-periodic integrands in Sobolev spaces, one common strategy is to apply the baker (tent) transform in \eqref{eq:baker} to each coordinate. This periodizes the integrand and places it in a periodic Korobov/Sobolev space with some effective smoothness $\alpha_\mathrm{eff}$. In particular, for integrands with dominating mixed smoothness two, baker-transformed lattice rules can achieve second-order convergence, i.e., WCE and RMSE of order $\calO(n^{-2+\delta})$ \citep{hickernell1998lattice}. In our numerical experiments in \Cref{sec:numerical_experiments} we always apply the baker transform to the non-periodic integrands of study. 

Digital nets are typically studied in either spaces of functions with bounded variation in the sense of Hardy--Krause (BVHK spaces) or in weighted Walsh--Sobolev spaces and the related Walsh--Sobolev RKHSs with the digitally-shift-invariant (DSI) $\alpha$-kernel in \Cref{sec:DSI_dnets}. For digital nets, convergence rates depend on both the smoothness of the function and the design of the digital net; for sufficiently smooth functions, different digital nets (possibly with interlacing) must be used to achieve higher-order convergence. Halton sequences are typically studied in BVHK spaces or analogs of weighted Walsh--Sobolev spaces which account for differing bases per dimension.

For digital nets with LMS, the best known theoretical RMSE is $\calO(n^{-\alpha+\delta})$ whereas empirical RMSEs of order $\calO(n^{-\alpha-1/2+\delta})$ have been observed. In \Cref{sec:numerical_experiments} we replicate this improved empirical RMSE convergence using our \texttt{QMCPy} implementation. We do not list Halton with NUS in \Cref{tab:wce_rmse_summary}, as the corresponding theory is less developed than for digital nets; see \citep{owen.gain_coefficients_scrambled_halton} for some ideas in this direction. 

In practice, we recommend using digital nets (and their higher-order variants) randomized via LMS together with either a digital shift or digital permutation. Empirically, this achieves the same RMSE convergence as NUS, but at a fraction of the cost to generate. This is backed by both our generation-time experiment and RMSE convergence experiment described in \Cref{sec:numerical_experiments}. For periodic functions, lattices are often a better choice. If using lattices for non-periodic functions, the baker transform is always recommended. We recommend randomization for all LD sequences as discussed in \Cref{sec:background}.

\section{Fast transforms and kernel computations} \label{sec:fast_transforms}

\subsection{Overview} 

Recall from the motivation in \Cref{sec:kernel_computations} that we would like to compute the matrix-vector product $\mK \by$ and solve the linear system $\mK^{-1} \by$ where $\by$ is a known vector of length $n$ and $\mK = \{K(\bx_i,\bx_{i'})\}_{i,i'=0}^{n-1}$ is an $n \times n$ symmetric positive definite (SPD) Gram matrix based on a SPD kernel $K$ and point set $\{\bx_i\}_{i=0}^{n-1}$. \Cref{tab:com_kernel_costs} compares our fast kernel methods against standard techniques in terms of both storage and computational costs. Decomposition of the symmetric positive definite (SPD) Gram matrix $\mK$ is the cost of computing the eigendecomposition or Cholesky factorization. The costs of matrix-vector multiplication and solving a linear system are the costs after performing the decomposition. Both storage and kernel computation costs are greatly reduced by pairing certain low-discrepancy point sets with special shift-invariant (SI) or digitally-shift-invariant (DSI) kernels. These fast algorithms rely on the fast Fourier transform in bit-reversed order (FFTBR), its inverse (IFFTBR), and the fast Walsh--Hadamard transform (FWHT). \Cref{fig:fast_transforms} gives schematics of the FFTBR and FWHT algorithms which enable these fast kernel methods. FFTBR is performed via a radix-2 decimation-in-time (DIT) FFT (a base-$2$ Cooley--Tukey algorithm) without the initial reordering of inputs \citep{cooley1965algorithm}. The following subsections detail these special kernel-point set pairings and how the structure induced in $\mK$ can be exploited for fast kernel computations. At the time of writing, our \texttt{QMCPy} implementations only support SI kernels paired to rank-$1$ lattices in natural order and DSI kernels paired to base-$2$ digital nets in natural order. 

\begin{table}[htbp]
    \centering
    \small
    \begin{tabular}{r|lll} 
        $\{\bx_i\}_{i=0}^{n-1}$ & \textbf{any} & \textbf{rank-$1$ lattice} & \textbf{digital net} \\
        $K$ structure & general SPD & SPD SI & SPD DSI \\
        $\mK$ storage & $\calO(n^2)$ & $\calO(n)$ & $\calO(n)$ \\
        $\mK \by$ cost  & $\calO(n^2)$ & $\calO(n \log n)$ & $\calO(n \log n)$ \\
        $\mK^{-1} \by$ cost & $\calO(n^3)$ & $\calO(n \log n)$ & $\calO(n \log n)$ \\
        methods & Cholesky & FFT / IFFT & FWHT 
    \end{tabular}
    \caption{Comparison of storage and cost requirements for kernel methods.}
    \label{tab:com_kernel_costs}
\end{table}


\begin{figure}[htbp]
    \centering
    \newcommand{\h}{1}
    \newcommand{\w}{2}
    \newcommand{\y}{4}
    \newcommand{\z}{6}
    \newcommand{\p}{5}
    \resizebox{\textwidth}{!}{%
    \begin{subfigure}[t]{.45\textwidth}
    \begin{tikzpicture}
        \draw (\z/2,\p+8*\h) node{FFTBR};
        \draw ( 0,\p+7*\h) node[draw,circle](l00){$y_0$}; 
        \draw ( 0,\p+6*\h) node[draw,circle](l10){$y_1$}; 
        \draw ( 0,\p+5*\h) node[draw,circle](l20){$y_2$}; 
        \draw ( 0,\p+4*\h) node[draw,circle](l30){$y_3$}; 
        \draw ( 0,\p+3*\h) node[draw,circle](l40){$y_4$}; 
        \draw ( 0,\p+2*\h) node[draw,circle](l50){$y_5$}; 
        \draw ( 0,\p+1*\h) node[draw,circle](l60){$y_6$}; 
        \draw ( 0,\p+0*\h) node[draw,circle](l70){$y_7$}; 
        \draw (\w,\p+7*\h) node[draw,circle](l01){};
        \draw (\w,\p+6*\h) node[draw,circle](l11){};
        \draw (\w,\p+5*\h) node[draw,circle](l21){};
        \draw (\w,\p+4*\h) node[draw,circle](l31){};
        \draw (\w,\p+3*\h) node[draw,circle](l41){};
        \draw (\w,\p+2*\h) node[draw,circle](l51){};
        \draw (\w,\p+1*\h) node[draw,circle](l61){};
        \draw (\w,\p+0*\h) node[draw,circle](l71){};
        \draw (\y,\p+7*\h) node[draw,circle](l02){};
        \draw (\y,\p+6*\h) node[draw,circle](l12){};
        \draw (\y,\p+5*\h) node[draw,circle](l22){};
        \draw (\y,\p+4*\h) node[draw,circle](l32){};
        \draw (\y,\p+3*\h) node[draw,circle](l42){};
        \draw (\y,\p+2*\h) node[draw,circle](l52){};
        \draw (\y,\p+1*\h) node[draw,circle](l62){};
        \draw (\y,\p+0*\h) node[draw,circle](l72){};
        \draw (\z,\p+7*\h) node[draw,circle](l03){$\ty_0$};
        \draw (\z,\p+6*\h) node[draw,circle](l13){$\ty_1$};
        \draw (\z,\p+5*\h) node[draw,circle](l23){$\ty_2$};
        \draw (\z,\p+4*\h) node[draw,circle](l33){$\ty_3$};
        \draw (\z,\p+3*\h) node[draw,circle](l43){$\ty_4$};
        \draw (\z,\p+2*\h) node[draw,circle](l53){$\ty_5$};
        \draw (\z,\p+1*\h) node[draw,circle](l63){$\ty_6$};
        \draw (\z,\p+0*\h) node[draw,circle](l73){$\ty_7$};
        \draw[->] (l00) -- (l11); 
        \draw[->] (l10) -- (l01); 
        \draw[->] (l20) -- (l31); 
        \draw[->] (l30) -- (l21); 
        \draw[->] (l40) -- (l51); 
        \draw[->] (l50) -- (l41); 
        \draw[->] (l60) -- (l71); 
        \draw[->] (l70) -- (l61);
        \draw (\w/2,\p+0.5*\h) node[draw,circle,fill=black,text=white]{$0$};
        \draw (\w/2,\p+2.5*\h) node[draw,circle,fill=black,text=white]{$0$};
        \draw (\w/2,\p+4.5*\h) node[draw,circle,fill=black,text=white]{$0$};
        \draw (\w/2,\p+6.5*\h) node[draw,circle,fill=black,text=white]{$0$};
        \draw[->] (l01) -- (l22); 
        \draw[->] (l11) -- (l32); 
        \draw[->] (l21) -- (l02); 
        \draw[->] (l31) -- (l12); 
        \draw[->] (l41) -- (l62); 
        \draw[->] (l51) -- (l72); 
        \draw[->] (l61) -- (l42); 
        \draw[->] (l71) -- (l52);
        \draw (\w/2+\y/2,\p+1*\h) node[draw,circle,fill=black,text=white]{$2$};
        \draw (\w/2+\y/2,\p+2*\h) node[draw,circle,fill=black,text=white]{$0$};
        \draw (\w/2+\y/2,\p+5*\h) node[draw,circle,fill=black,text=white]{$2$};
        \draw (\w/2+\y/2,\p+6*\h) node[draw,circle,fill=black,text=white]{$0$};
        \draw[->] (l02) -- (l43); 
        \draw[->] (l12) -- (l53); 
        \draw[->] (l22) -- (l63); 
        \draw[->] (l32) -- (l73); 
        \draw[->] (l42) -- (l03); 
        \draw[->] (l52) -- (l13); 
        \draw[->] (l62) -- (l23); 
        \draw[->] (l72) -- (l33);
        \draw (\y/2+\z/2,\p+2*\h) node[draw,circle,fill=black,text=white]{$3$};
        \draw (\y/2+\z/2,\p+3*\h) node[draw,circle,fill=black,text=white]{$2$};
        \draw (\y/2+\z/2,\p+4*\h) node[draw,circle,fill=black,text=white]{$1$};
        \draw (\y/2+\z/2,\p+5*\h) node[draw,circle,fill=black,text=white]{$0$};
        \draw (0,3*\h) node[draw,circle](wit){$a$};
        \draw (0,0) node[draw,circle](wib){$b$};
        \draw (\y/3+2*\z/3,3*\h) node[draw,circle,minimum size=2cm](wot){$\frac{a+W_n^rb}{\sqrt{2}}$};
        \draw (\y/3+2*\z/3,0) node[draw,circle,minimum size=2cm](wob){$\frac{a-W_n^rb}{\sqrt{2}}$};
        \draw[->] (wit) -- (wob); 
        \draw[->] (wib) -- (wot); 
        \draw (.45*\z,1.5*\h) node[draw,circle,fill=black,text=white]{$r$};
        \draw (.45*\z,3*\h) node{$W_n = e^{-2 \pi \sqrt{-1}/n}$};
    \end{tikzpicture}
    \end{subfigure}
    \hspace{1cm}
    \begin{subfigure}[t]{.45\textwidth}
    \begin{tikzpicture}
        \draw (\z/2,\p+8*\h) node{FWHT};
        \draw ( 0,\p+7*\h) node[draw,circle](l00){$y_0$};
        \draw ( 0,\p+6*\h) node[draw,circle](l10){$y_1$};
        \draw ( 0,\p+5*\h) node[draw,circle](l20){$y_2$};
        \draw ( 0,\p+4*\h) node[draw,circle](l30){$y_3$};
        \draw ( 0,\p+3*\h) node[draw,circle](l40){$y_4$};
        \draw ( 0,\p+2*\h) node[draw,circle](l50){$y_5$};
        \draw ( 0,\p+1*\h) node[draw,circle](l60){$y_6$};
        \draw ( 0,\p+0*\h) node[draw,circle](l70){$y_7$};
        \draw (\w,\p+7*\h) node[draw,circle](l01){};
        \draw (\w,\p+6*\h) node[draw,circle](l11){};
        \draw (\w,\p+5*\h) node[draw,circle](l21){};
        \draw (\w,\p+4*\h) node[draw,circle](l31){};
        \draw (\w,\p+3*\h) node[draw,circle](l41){};
        \draw (\w,\p+2*\h) node[draw,circle](l51){};
        \draw (\w,\p+1*\h) node[draw,circle](l61){};
        \draw (\w,\p+0*\h) node[draw,circle](l71){};
        \draw (\y,\p+7*\h) node[draw,circle](l02){};
        \draw (\y,\p+6*\h) node[draw,circle](l12){};
        \draw (\y,\p+5*\h) node[draw,circle](l22){};
        \draw (\y,\p+4*\h) node[draw,circle](l32){};
        \draw (\y,\p+3*\h) node[draw,circle](l42){};
        \draw (\y,\p+2*\h) node[draw,circle](l52){};
        \draw (\y,\p+1*\h) node[draw,circle](l62){};
        \draw (\y,\p+0*\h) node[draw,circle](l72){};
        \draw (\z,\p+7*\h) node[draw,circle](l03){$\ty_0$};
        \draw (\z,\p+6*\h) node[draw,circle](l13){$\ty_1$};
        \draw (\z,\p+5*\h) node[draw,circle](l23){$\ty_2$};
        \draw (\z,\p+4*\h) node[draw,circle](l33){$\ty_3$};
        \draw (\z,\p+3*\h) node[draw,circle](l43){$\ty_4$};
        \draw (\z,\p+2*\h) node[draw,circle](l53){$\ty_5$};
        \draw (\z,\p+1*\h) node[draw,circle](l63){$\ty_6$};
        \draw (\z,\p+0*\h) node[draw,circle](l73){$\ty_7$};
        \draw[->] (l00) -- (l41); 
        \draw[->] (l10) -- (l51); 
        \draw[->] (l20) -- (l61); 
        \draw[->] (l30) -- (l71); 
        \draw[->] (l40) -- (l01); 
        \draw[->] (l50) -- (l11); 
        \draw[->] (l60) -- (l21); 
        \draw[->] (l70) -- (l31);
        \draw (\w/2,\p+2*\h) node[draw,circle,fill=black,text=white]{};
        \draw (\w/2,\p+3*\h) node[draw,circle,fill=black,text=white]{};
        \draw (\w/2,\p+4*\h) node[draw,circle,fill=black,text=white]{};
        \draw (\w/2,\p+5*\h) node[draw,circle,fill=black,text=white]{};
        \draw[->] (l01) -- (l22); 
        \draw[->] (l11) -- (l32); 
        \draw[->] (l21) -- (l02); 
        \draw[->] (l31) -- (l12); 
        \draw[->] (l41) -- (l62); 
        \draw[->] (l51) -- (l72); 
        \draw[->] (l61) -- (l42); 
        \draw[->] (l71) -- (l52);
        \draw (\w/2+\y/2,\p+1*\h) node[draw,circle,fill=black,text=white]{};
        \draw (\w/2+\y/2,\p+2*\h) node[draw,circle,fill=black,text=white]{};
        \draw (\w/2+\y/2,\p+5*\h) node[draw,circle,fill=black,text=white]{};
        \draw (\w/2+\y/2,\p+6*\h) node[draw,circle,fill=black,text=white]{};
        \draw[->] (l02) -- (l13); 
        \draw[->] (l12) -- (l03); 
        \draw[->] (l22) -- (l33); 
        \draw[->] (l32) -- (l23); 
        \draw[->] (l42) -- (l53); 
        \draw[->] (l52) -- (l43); 
        \draw[->] (l62) -- (l73); 
        \draw[->] (l72) -- (l63);
        \draw (\y/2+\z/2,\p+0.5*\h) node[draw,circle,fill=black,text=white]{};
        \draw (\y/2+\z/2,\p+2.5*\h) node[draw,circle,fill=black,text=white]{};
        \draw (\y/2+\z/2,\p+4.5*\h) node[draw,circle,fill=black,text=white]{};
        \draw (\y/2+\z/2,\p+6.5*\h) node[draw,circle,fill=black,text=white]{};
        \draw (0,3*\h) node[draw,circle](wit){$a$};
        \draw (0,0) node[draw,circle](wib){$b$};
        \draw (\y/3+2*\z/3,3*\h) node[draw,circle,minimum size=2cm](wot){$\frac{a+b}{\sqrt{2}}$};
        \draw (\y/3+2*\z/3,0) node[draw,circle,minimum size=2cm](wob){$\frac{a-b}{\sqrt{2}}$};
        \draw[->] (wit) -- (wob); 
        \draw[->] (wib) -- (wot); 
        \draw (.45*\z,1.5*\h) node[draw,circle,fill=black,text=white]{};
    \end{tikzpicture}
    \end{subfigure}
    }
    \caption{FFT in bit-reversed order (FFTBR) and FWHT schemes.}
    \label{fig:fast_transforms}
\end{figure}  

\subsection{SI kernels, rank-$1$ lattices, and the FFTBF/IFFTBR} \label{sec:SI_lattices}

A kernel is said to be \emph{shift-invariant} (SI) when 
$$K(\bx,\bx') = \tK((\bx-\bx') \bmod 1)$$
for some $\tK$, i.e., the kernel is only a function of the component-wise difference between inputs modulo $1$. One class of SI kernels take the form 
\begin{equation}
    K(\bx,\bx') = \zeta \sum_{\fu \subseteq \{1,\dots,d\}} \gamma_\fu \prod_{j \in \fu} K_{\alpha_j}(x_j, x_j')
    \label{eq:K_SI_full}
\end{equation}
where 
$$K_\alpha(x,x') = \frac{(2\pi)^{2\alpha}}{(-1)^{\alpha+1}(2\alpha)!} B_{2\alpha}((x-x')\bmod 1)$$
with smoothness parameters $\balpha=(\alpha_1,\dots,\alpha_d) \in \bbN^d$, general scale $\zeta>0$, weights $\{\gamma_\fu\}_{\fu \subseteq \{1,\dots,d\}}$, and $B_\ell$ denoting the Bernoulli polynomial of degree $\ell$. The corresponding RKHS $H$ is a weighted periodic unanchored Sobolev space of smoothness $\balpha \in \bbN^d$ with norm 
$$\lVert f \rVert_H^2 := \sum_{\fu \subseteq \{1,\dots,d\}} \frac{1}{(2 \pi)^{2 \lvert \fu \rvert}\gamma_\fu} \int_{[0,1]^{\lvert \fu \rvert}} \llvert \int_{[0,1]^{s - \lvert \fu \rvert}} \left(\prod_{j \in \fu} \frac{\partial^{\alpha_j}}{\partial y_j^{\alpha_j}}\right) f(\by) \D \by_{- \fu} \rrvert^2 \D \by_\fu, $$
where $f:[0,1]^d \to \bbR$, $\by_\fu = (\by_j)_{j \in \fu}$, $\by_{-\fu} := (y_j)_{j \in \{1,\dots,d\} \setminus \fu}$, and $\lvert \fu \rvert$ is the cardinality of $\fu$. The space $H$ is a special case of the weighted Korobov space which has real smoothness parameter $\alpha$ characterizing the rate of decay of Fourier coefficients \citep{kaarnioja.kernel_interpolants_lattice_rkhs,kaarnioja.kernel_interpolants_lattice_rkhs_serendipitous,cools2021fast,cools2020lattice,sloan2001tractability,kuo2004lattice}.

The kernel \eqref{eq:K_SI_full} is the sum over $2^d$ terms and thus becomes impractical to compute for large $d$. A simplified space assumes $\gamma_\fu$ are \emph{product weights} which take the form $\gamma_\fu = \prod_{j \in \fu} \gamma_j$ for some $\{\gamma_j\}_{j=1}^d$. The kernel with product weights takes the simplified form 
\begin{equation}
    K(\bx,\bx') = \zeta \prod_{j=1}^d \left(1+\gamma_j K_{\alpha_j}(x_j,x_j')\right).
    \label{eq:K_SI_prod}
\end{equation}
\citet{kaarnioja.kernel_interpolants_lattice_rkhs_serendipitous} gives a review of lattice-based kernel interpolation in such weighted spaces along with the development of serendipitous weights, which are a special form of high-performance product weights. 

Suppose $n=2^m$ for some $m \in \bbN_0$ and let $\{\bx_i\}_{i=0}^{b^m-1}$ be a shifted lattice with shift $\bDelta \in [0,1)^d$ generated in \emph{linear order}. For $0 \leq i,k < n$
$$K(\bx_i,\bx_k) = K\left(\frac{(i-k) \bg}{n},\bzero\right),$$ 
so the Gram matrix $\mK$ is circulant. Let $R_m(i)$ flip the first $m$ bits of $0 \leq i < 2^m$ in base $b=2$ so that if $i=\sum_{t=0}^{m-1} \mi_t 2^t$ then $R(i) = \sum_{t=0}^{m-1} \mi_{m-t-1} 2^t$. The first step in the radix-2 decimation-in-time (DIT) FFT (a base-$2$ Cooley--Tukey algorithm) \citep{cooley1965algorithm} is to reorder the inputs $\{y_i\}_{i=0}^{2^m-1}$ into bit-reversed order $\{y_{R(i)}\}_{i=0}^{2^m-1}$. The last step in computing the IFFT is to reorder the outputs $\{y_{R(i)}\}_{i=0}^{2^m-1}$ into bit-reversed order $\{y_i\}_{i=0}^{2^m-1}$. Therefore, one should generate lattice points in \emph{natural order} and skip the first step of the FFT and last step of the inverse FFT. We call such algorithms the FFTBR and IFFTBR respectively. 

The Fourier matrix is $\overline{\mF_m} = \{W_m^{ij}\}_{i,j=0}^{2^m-1}$ where $W_m = \exp(-2 \pi \sqrt{-1}/2^m)$. Let $\overline{\mT_m} = \{W_m^{iR_m(j)}\}_{i,j=0}^{2^m-1}$ so that  
$$\overline{\mF_m \{y_i\}_{j=0}^{2^m-1}} = \overline{\mT_m \{y_{R(j)}\}_{j=0}^{2^m-1}}.$$
For $\{\bx_i\}_{i=0}^{2^m-1}$ a lattice in natural order, we have the eigendecomposition
$$\mK = \frac{1}{n} \mT_m \mLambda_m \overline{\mT_m}$$
where $\mLambda_m$ is a diagonal matrix of eigenvalues.

Notice that $\overline{\mT_{m+1}} = \left\{W_{m+1}^{i R_{m+1}(j)}\right\}_{i,j=0}^{2^{m+1}-1}$. For $0 \leq j < 2^m$ we have $R_{m+1}(j) = 2 R_m(j)$ so $W^{i R_{m+1}(j)}_{m+1}= W^{i R_m(j)}_m$. For $2^m \leq j < 2^{m+1}$ we have $R_{m+1}(j) = 2R_m(j-2^m)+1$ so $W^{i R_{m+1}(j)}_{m+1} = W_m^{i R(j-2^m)} W_{m+1}^i$. Moreover, for $0 \leq i < 2^m$ we have $W_{m+1}^{2^m+i} = -W_{m+1}^i$. Define $\tbw_m = \{W_{m+1}^i\}_{i=0}^{2^m-1}$. Then 
$$\overline{\mT_{m+1}} = \begin{bmatrix} \overline{\mT_m} & \diag(\tbw_m) \overline{\mT_m} \\ \overline{\mT_m} & - \diag(\tbw_m) \overline{\mT_m} \end{bmatrix}$$
and $\tbw_{m+1}$ is the $\alpha=2$ interlacing of $\tbw_m$ and $W_{m+1} \tbw_m$.

\subsection{DSI kernels, digital nets, and the FWHT} \label{sec:DSI_dnets}

For $x,y \in [0,1)$ with $x = \sum_{t=1}^{m} \mx_{t-1} b^{-t}$ and $y = \sum_{t=1}^m \my_{t-1} b^{-t}$ let 
$$x \oplus y = \sum_{t=1}^m ((\mx_{t-1} + \my_{t-1}) \bmod b) b^{-t} \quad\text{and}\quad x \ominus y = \sum_{t=1}^m ((\mx_{t-1} - \my_{t-1}) \bmod b) b^{-t}$$
denote digital addition and subtraction respectively. Note that in base $b=2$ these operations are both equivalent to the exclusive or (XOR) between bits.  For vectors $\bx,\by \in [0,1)^d$, digital operations $\bx \oplus \by$ and  $\bx \ominus \by$ act component-wise. 
A kernel is said to be \emph{digitally-shift-invariant} (DSI) when 
$$K(\bx, \bx') = \tK(\bx \ominus \bx')$$
for some $\tK$. 

Suppose $n =b^m$ for some $m \in \bbN_0$ and let $\{\bx_i\}_{i=0}^{b^m-1}$ be a digitally-shifted digital net in \emph{natural order} (possibly of higher-order and/or with LMS). \citet[Theorem 5.3.1 and Theorem 5.3.2]{rathinavel.bayesian_QMC_sobol} shows that, for $b=2$, the Gram matrix $\mK$ is \emph{RSBT} (recursive symmetric block Toeplitz) which implies the eigendecomposition
$$\mK = \frac{1}{n} \mH_m \mLambda_m \mH_m$$ where $\mH_m$ is the $2^m \times 2^m$ Hadamard matrix defined by $\mH^{(0)} = [1]$ and the relationship 
$$\mH_{m+1} = \begin{bmatrix} \mH_m & \mH_m \\ \mH_m & - \mH_m \end{bmatrix}.$$
Multiplying by $\mH_m$ can be done at $\calO(n \log n)$ cost using the FWHT. 
Note that $\mK$ is not generally RSBT when $\{\bx_i\}_{i=0}^{b^m-1}$ has a NUS. 

We now describe one-dimensional DSI kernels $K_\alpha(x,x')$ of higher-order smoothness $\alpha \geq 2$. Weighted sums over products of these one-dimensional kernels may be used to construct higher-dimensional kernels, e.g., those with product weights 
$$K(\bx,\bx') = \zeta \prod_{j=1}^d \left(1+\gamma_j K_{\alpha_j}(x_j,x_j')\right).$$
To begin, let us write $k \in \bbN_0$ as $k = \sum_{\ell=1}^{\#k} \mk_{a_\ell} b^{a_\ell}$ where $a_1 > \cdots > a_{\#k} \geq 0$ are the $\#k$ indices of non-zero digits $\mk_{a_\ell} \in \{1,\dots,b-1\}$ in the base-$b$ expansion of $k$. Then the $k^\mathrm{th}$ Walsh coefficient is 
$$\wal_k(x) = e^{2 \pi \sqrt{-1}/b \sum_{\ell=1}^{\#k} \mk_{a_\ell} \mx_{a_\ell+1}}.$$ 
For $\alpha \in \bbN$ we also define the weight function from \citet{dick.decay_walsh_coefficients_smooth_functions} as
$$\mu_\alpha(k) = \sum_{\ell=1}^{\min\{\alpha,\#k\}} (a_\ell+1).$$
For $\alpha \geq 2$, let the DSI kernel 
\begin{equation}
    K_\alpha(x,x') = \sum_{k \in \bbN} \frac{\wal_k(x \ominus x')}{b^{\mu_\alpha(k)}}
    \label{eq:DSI_kernel}
\end{equation}
have corresponding RKHS $H_\alpha$. For $\alpha \geq 2$, \citep{baldeaux.polylat_efficient_comp_worse_case_error_cbc} showed traditional Sobolev smoothness spaces may be continuously embedded in $H_\alpha$. The case of $\alpha=1$ is treated separately by \citet{dick.multivariate_integraion_sobolev_spaces_digital_nets}. 

The following theorem gives computable forms of a few higher-order DSI kernels in base $b=2$. These kernels are also useful in the computation of the worst-case error (WCE) of QMC with higher-order polynomial-lattices \citep{baldeaux.polylat_efficient_comp_worse_case_error_cbc}. In fact, their paper details how to compute $K_\alpha$ in \eqref{eq:DSI_kernel} at $\calO(\alpha \#x)$ cost where $\#x$ is the number of non-zero digits in the base-$b$ expansion of $x$. Their paper also gives the expressions for $\alpha=2$ and $\alpha=3$ in the theorem below. The $\alpha=4$ expression below is derived in appendix \Cref{appendix:proofs}. While this $\alpha=4$ form contains an infinite sum, the terms in the sum decay rapidly, thus enabling fast numerical evaluation to machine precision. 

\begin{theorem} \label{thm:explicit_DSI_low_order_forms}
    Fix the base $b=2$. Let $\beta(x) = - \lfloor \log_2(x) \rfloor$ and for $\nu \in \bbN$ define $t_\nu(x) = 2^{-\nu \beta(x)}$ where for $x=0$ we set $\beta(x) = t_\nu(x) = 0$. Then \eqref{eq:DSI_kernel} satisfies
    \begin{align*}
        K_2(x) &= -1+-\beta(x) x + \frac{5}{2}\left[1-t_1(x)\right], \\
        K_3(x) &= -1+\beta(x)x^2-5\left[1-t_1(x)\right]x+\frac{43}{18}\left[1-t_2(x)\right], \\
        K_4(x) &= -1+- \frac{2}{3}\beta(x)x^3+5\left[1-t_1(x)\right]x^2 - \frac{43}{9}\left[1-t_2(x)\right]x \\
        &+\frac{701}{294}\left[1-t_3(x)\right]+\beta(x)\left[\frac{1}{48}\sum_{a=0}^\infty \frac{\wal_{2^a}(x)}{2^{3a}} - \frac{1}{42}\right].
    \end{align*}
\end{theorem}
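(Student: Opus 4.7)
The plan is to evaluate the Walsh series in \eqref{eq:DSI_kernel} by peeling off the top $\alpha$ nonzero digits of $k$ and summing the remaining low-order bits in closed form. Partition the terms according to $r := \min(\#k, \alpha)$ and write $K_\alpha(x) = \sum_{r=1}^{\alpha-1} T_r(x) + \hat T_\alpha(x)$, where $T_r$ collects $\#k = r$ summands and $\hat T_\alpha$ collects $\#k \geq \alpha$. For the large piece, each $k$ with top digit positions $a_1 > \cdots > a_\alpha$ factors as $k = 2^{a_1} + \cdots + 2^{a_\alpha} + k'$ with $0 \leq k' < 2^{a_\alpha}$, and since $\mu_\alpha(k)$ does not depend on $k'$, the inner sum over $k'$ collapses by the Walsh-sum identity
\begin{equation*}
    \sum_{k' = 0}^{2^a - 1} \wal_{k'}(x) \; = \; 2^a \, \mathbf{1}[\mx_1 = \cdots = \mx_a = 0],
\end{equation*}
which follows from multiplicativity of single-bit Walsh characters. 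This reduces $\hat T_\alpha(x)$ to a nested sum over $a_1 > \cdots > a_\alpha \geq 0$ with an indicator confining $a_\alpha$ to a finite range.

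The remaining nested sum telescopes via single-bit tail series $s_\nu(a_0) := \sum_{a \geq a_0} \wal_{2^a}(x) / 2^{\nu a}$. The base case $\nu = 1$ admits a closed form in $x$ that follows directly from the binary expansion of $x$ itself; this generates the polynomial factors $x, x^2, \ldots, x^{\alpha - 1}$ after $(\alpha - 1)$-fold iteration of the telescope. The outermost sum over $a_\alpha$ is restricted by the indicator to $\{0, 1, \ldots, \beta(x) - 2\}$ when $\beta(x) \geq 2$, whose cardinality produces the $\beta(x)$ prefactor, while the finite geometric partial sums $\sum_{a = 0}^{\beta(x) - 2} 2^{-\nu a}$ rearrange to the factors $[1 - t_\nu(x)]$. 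The rational constants $5/2$, $43/18$, and $701/294$ appear from the closed-form evaluation of $\sum_{j \geq 0} j^\ell 2^{-\nu j}$-type sums that emerge when the telescoping introduces polynomial factors in the summation index. The smaller pieces $T_r(x)$ for $r < \alpha$ are evaluated by the same machinery but with only $r$ top positions.

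The main obstacle is the case $\beta(x) = 1$ (so $\mx_0 = 1$), where the indicator $\mathbf{1}[\mx_1 = \cdots = \mx_a = 0]$ is no longer determined by $\beta(x)$ alone. The resolution is that the polynomial factors in $x$ already encode the remaining bit pattern through the expansion $x = 1/2 + \sum_{t \geq 2} \mx_{t-1} 2^{-t}$, so the apparent case-split dissolves upon collecting like terms and the formula holds uniformly for $x \in (0, 1)$. A secondary subtlety appears only at $\alpha = 4$: one innermost telescoping sum fails to close into a polynomial and leaves behind a residual $s_3$-type series, producing the term $\beta(x) \left[ (1/48) \sum_{a \geq 0} \wal_{2^a}(x) / 2^{3a} - 1/42 \right]$ in $K_4(x)$, with the constants pinned down by matching the boundary contributions of the inner sum. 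Finally, I would verify the identities at $x = 0$, where both sides reduce to the constant $\sum_{k \geq 1} 2^{-\mu_\alpha(k)}$ (which equals $3/2$ for $\alpha = 2$ by direct enumeration), as a sanity check.
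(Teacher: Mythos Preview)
Your high-level strategy---partitioning by $\#k$ into $T_r$ and $\hat T_\alpha$, then collapsing the low-order bits of $k$ via the Walsh-sum identity---is exactly the paper's approach (its $s_\nu$ and $\tilde s_\alpha$). Two details go astray, however. First, the Walsh-sum identity $\sum_{k'<2^{a}}\wal_{k'}(x)=2^{a}\mathbf{1}[a\le\beta(x)-1]$ is \emph{always} determined by $\beta(x)$: the first $a$ binary digits of $x$ vanish iff $a\le\beta(x)-1$, with no special behaviour at $\beta(x)=1$. Your stated range $\{0,\dots,\beta(x)-2\}$ is off by one and then inconsistent with your own cardinality claim. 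The genuine subtlety you miss is that after the collapse the summand still carries $\wal_{2^{a_\alpha}}(x)$, which equals $+1$ for $a_\alpha<\beta(x)-1$ but $-1$ at the boundary $a_\alpha=\beta(x)-1$. The paper handles this by splitting $\tilde s_\alpha=T_1-T_2$ on that sign and then, for each piece, doing inclusion--exclusion on which of $a_1,\dots,a_{\alpha-1}$ lie above the boundary; without this split your ``telescoping'' will not produce the correct polynomial coefficients.

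Second, for the small pieces $T_r$ with $r<\alpha$ the paper does not telescope tail series. It instead proves a lemma giving the exact Walsh coefficients of the monomials $x,x^2,x^3$ (via Fine's recursion for $\int_0^x\wal_k$) and \emph{inverts} those expansions to read off $s_1,s_2,s_3$ directly---$s_3$ in particular emerges as a cubic in $x$ plus the residual $\tfrac{1}{24}\sum_a\wal_{2^a}(x)/2^{3a}$, which is where the $1/48$ and the leftover series in $K_4$ ultimately come from. Your telescoping route might be made to work, but as written it does not explain how closed polynomial forms in $x$ arise from the single-bit tail series, nor pin down why the residual at $\alpha=4$ carries exactly those constants.
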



\subsection{Fast kernel computations and efficient eigenvalue updates} \label{sec:efficient_updates}

The formulations in the previous two subsections have the following properties. 
\begin{enumerate}
    \item For $n=2^m$, the Gram matrix $\mK_m \in \bbR^{n \times n}$ can be written as
    $$\mK_m = \mV_m \mLambda_m \overline{\mV_m}$$
    where $\mV_m \overline{\mV_m} = \mI$ and the first column of $\mV_m$, denoted $\bv_{m1}$, is the constant $1/\sqrt{n}$.
    \item $\mV_m \in \bbC^{n \times n}$ satisfies $\mV_0 = [1]$ and 
    \begin{equation}
        \mV_{m+1} = \begin{bmatrix} \mV_m & \mV_m \\  \mV_m \overline{\diag(\tbw_m)} & - \mV_m \overline{\diag(\tbw_m)} \end{bmatrix}/\sqrt{2}
        \label{eq:ft_next_parts}
    \end{equation}
    with $\tbw_m \in \bbC^n$ soon to be defined. 
    \item $\mV_m \by$ and $\overline{\mV_m} \by$ are each computable at cost $\calO(2^mm)= \calO(n \log n)$.
\end{enumerate}
For the case of lattice points with SI kernels, $\mV_m = \mT_m/\sqrt{2^m}$ is the scaled and permuted Fourier matrix and $\tbw_m = \{\exp(-\pi \sqrt{-1}/2^m i)\}_{i=0}^{2^m-1}$, so $\overline{\mV_m} \by$ and $\mV_m \by$ can be computed using the FFTBR and IFFTBR respectively. For the case of digital nets with 
DSI kernels, $\mV_m = \mH_m/\sqrt{n}$ is the real symmetric Hadamard matrix and $\tbw_m = \{1\}_{i=0}^{2^m-1}$, so both $\overline{\mV_m} \by$ and $\mV_m \by$ can be computed using a FWHT. 

With $\mLambda_m = \diag(\blambda_m)$, we have 
$$\blambda_m = \sqrt{n} \mLambda_m\overline{\bv_{m1}} = \sqrt{n}\;\overline{\mV_m} \left(\mV_m \mLambda_m \overline{\bv_{m1}}\right) = \sqrt{n} \; \overline{\mV_m} \bk_{m1}$$
which can be computed at $\calO(n \log n)$ cost and only requires storing the first column of $\mK^{(m)}$ which we denote by $\bk_{m1}$. Moreover, 
\begin{equation*}
    \mK \by = \mV_m(\tby \odot \blambda_m) \qquad \text{and}\qquad \mK^{-1} \by = \mV_m (\tby \odiv \blambda_m)
\end{equation*}
may each be evaluated at cost $\calO(n \log n)$ where $\odot$ denotes the Hadamard (component-wise) product and $\odiv$ denotes component-wise division. 

The following code implements the fast Gram matrix multiplication $\bu := \mK \by$ and the fast Gram matrix linear system solution $\bv := \mK^{-1} \by$ for the SI-lattice variant. It also includes an application of \eqref{eq:ft_next_parts} to update transformed values, i.e., to compute $\tby_\mathrm{full} = \overline{\mV_{m+1}} \begin{bmatrix} \by \\ \by_\mathrm{new} \end{bmatrix}$ given $\tby = \overline{\mV_m} \by$. An analogous code for the DSI-digital net variant is given in Listing \ref{code:DSI_dnet}.

\lstinputlisting[style=Python, caption={Efficient Gram matrix operations and transform updates for the SI-lattice variant.}, label={code:SI_lattice}]{si_lattice.py}


\section{Numerical experiments} \label{sec:numerical_experiments}

All numerical experiments were carried out on a 2025 M4 MacBook Air. \Cref{fig:timing} compares the wall-clock time required to generate point sets and perform fast transforms for high dimensions and/or number of randomizations. For IID and randomized LD generators, we vary the number of randomizations $R$, the number of dimensions $d$, and the sequence size $n$. Timings include initialization, and a new point set is generating for each plotted $(R,n,d)$. Fast transforms are applied to $R$ sequences of size $n$ in a vectorized fashion. 

Following the theory, our implementation scales linearly in the number of dimensions and randomizations. For large numbers of randomizations, our vectorized \texttt{QMCPy} LD generators significantly outperform the looped implementations in \texttt{SciPy} and \texttt{PyTorch}. For fast transforms, our FFTBR and IFFTBR implementations are the same speed as \texttt{SciPy}'s FFT and IFFT algorithms. Our FWHT implementation is significantly faster than the \texttt{SymPy} implementation, especially when applying the FWHT to multiple sequences simultaneously. IID points, randomly shifted lattices, and digital nets (DNs) with LMS and digital shifts (DSs), including higher-order versions, are the fastest sequences to generate. Digital nets in base $b=2$ exploit Gray code order, integer storage of bit-vectors, and exclusive or (XOR) operations to perform digital addition. Halton point sets are slower to generate as they cannot exploit these advantages. NUS, especially higher-order versions, are significantly slower to generate than LMS randomizations. Even so, an LMS interlacing scrambling of digital shifts empirically achieves the $\calO(n^{-\alpha-1/2+\delta})$ RMSE convergence rate which is theoretically guaranteed only for NUS scrambling, as we show in the next experiment.  

\begin{figure}[htbp]
    \centering
    \includegraphics[width=1\textwidth]{./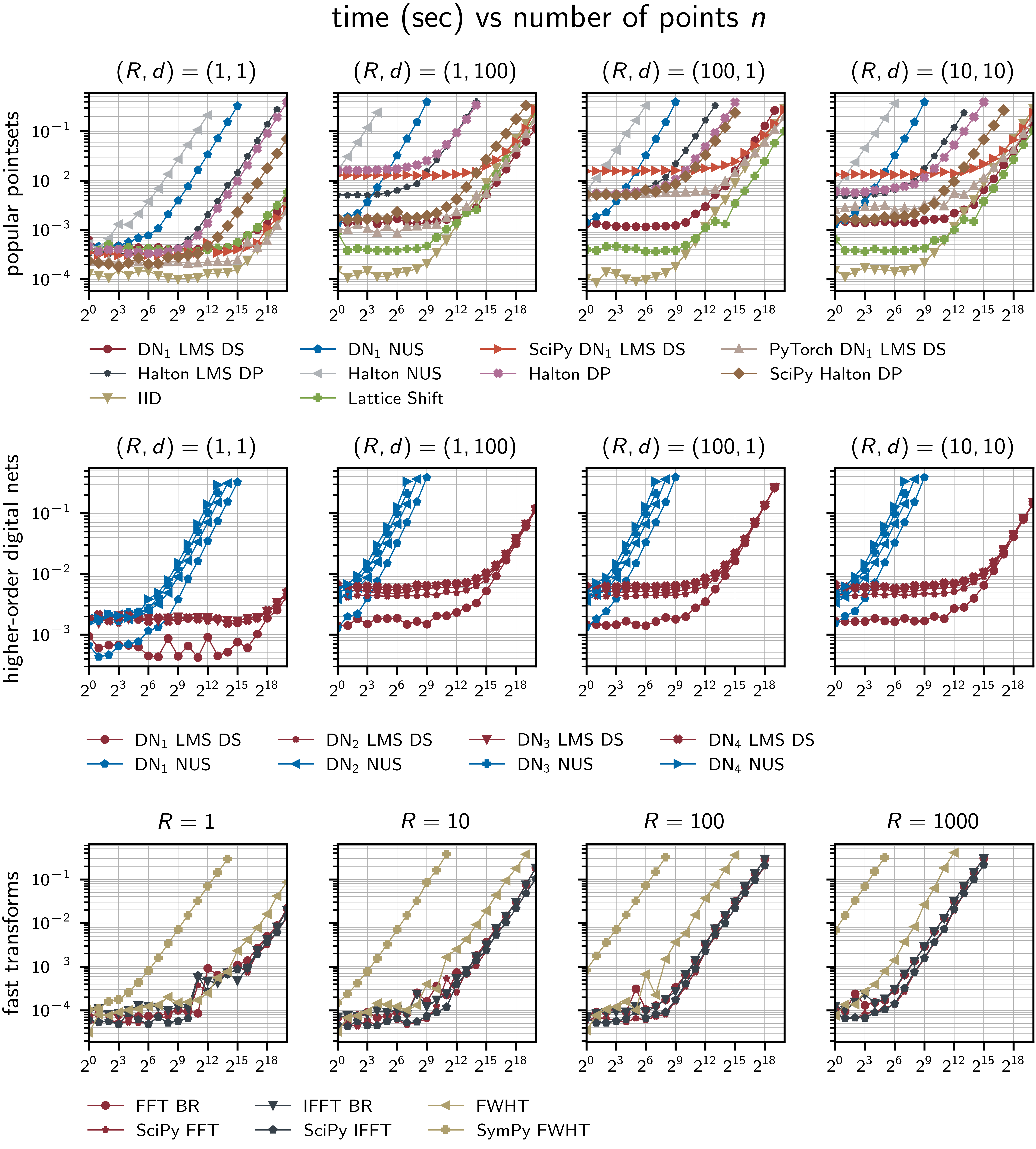}
    \caption{Comparison of time required to generate point sets and perform fast transforms.}
    \label{fig:timing}
\end{figure}

\Cref{fig:convergence} plots the RMSE convergence of RQMC methods applied to the integrands described below.
\begin{description}
    \item[Simple function, $d=1$,] has $f(x) = x e^x-1$. This was used by \citet{dick.higher_order_scrambled_digital_nets} where higher-order digital net scrambling was first proposed.
    \item[Simple function, $d=2$,] has $f(\bx) = x_2 e^{x_1 x_2}/(e-2)-1$. This was also considered by \citet{dick.higher_order_scrambled_digital_nets}. 
    \item[Oakley \& O'Hagan, $d=2$,] has $f(\bx) = g((\bx-1/2)/50)$ for $g(\bt) = 5+t_1+t_2+2\cos(t_1)+2\cos(t_2)$, see the original work of \citet{oakley2002bayesian} or the VLSE (Virtual Library of Simulation Experiments, \url{https://www.sfu.ca/~ssurjano/uq.html}).
    \item[G-Function, $d=3$,] has periodic $f(\bx) = \prod_{j=1}^d \frac{\lvert 4x_j-2\rvert-a_j}{1+a_j}$ with $a_j = (j-2)/2$ for $1 \leq j \leq d$, see the works of \citet{crestaux2007polynomial} and \citet{marrel2009calculations} or the VLSE. 
    \item[Oscillatory Genz, $d=3$,] has $f(\bx) = \cos\left(-\sum_{j=1}^d c_j x_j \right)$ with coefficients of the third kind $c_j = 4.5 \tc_j/\sum_{j=1}^d \tc_j$ where $\tc_j = \exp\left(j \log\left(10^{-8}\right)/d\right)$. This is a common test function for numerical integration quantification which is available in the \texttt{Dakota} software \citep{adams2020dakota} among others.  
    \item[Corner-peak Genz, $d=3$,] has $f(\bx) = \left(1+\sum_{j=1}^d c_j x_j\right)^{-(d+1)}$ with coefficients of the second kind $c_j = 0.25 \tc_j/\sum_{j=1}^d \tc_j$ where $\tc_j = 1/j^2$. This is also available in \texttt{Dakota} for testing numerical integration.
\end{description}
For each problem, the RMSE of the (Q)MC estimator $\hmu$ in \eqref{eq:mc_approx} is approximated using $300$ independent randomizations of an IID or randomized LD point sets from \texttt{QMCPy}. IID points consistently achieve the theoretical $\calO(n^{-1/2})$ convergence rate. For shifted lattices, we periodized the integrand using a baker transform \cref{eq:baker} which does not change the mean, i.e., we use $\tilde{f}(\bx) = f(1-2\lvert \bx-1/2\rvert)$ in place of $f(\bx)$ in \eqref{eq:mc_approx}. Shifted lattices consistently attained RMSEs like $\calO(n^{-2})$, even for non-periodic functions, as expected from \citep{hickernell1998lattice}. Our base-$2$ digital nets with LMS and DS, including higher-order versions with higher-order LMS, consistently achieved the lowest RMSEs and best rates of convergence. For the $d=1$ integrand, we are able to realize a rate of $\calO(n^{-\min\{\alpha,\talpha\}-1/2-\delta})$ where $\alpha$ is the higher-order digital interlacing of the net, $\talpha$ is the smoothness of the integrand, and $\delta>0$ is arbitrarily small. For the $2$-dimensional and $3$-dimensional integrands we were also able to achieve higher-order convergence in some cases, but are hindered by the limited smoothness and anisotropy of the test functions, as well as increasing dimensional effects that make the optimal rates harder to realize uniformly across all problems \citep{dick.high_dim_integration_qmc_way}. These empirical RMSE rates for LMS with DS match the theoretical rates for NUS, while requiring only a fraction of the generation time (see \Cref{fig:timing}), thus supporting \Cref{tab:wce_rmse_summary} and the practical recommendations in \Cref{sec:summary_convergence_practical_guidance}.

\begin{figure}[htbp]
    \centering
    \includegraphics[width=1\textwidth]{./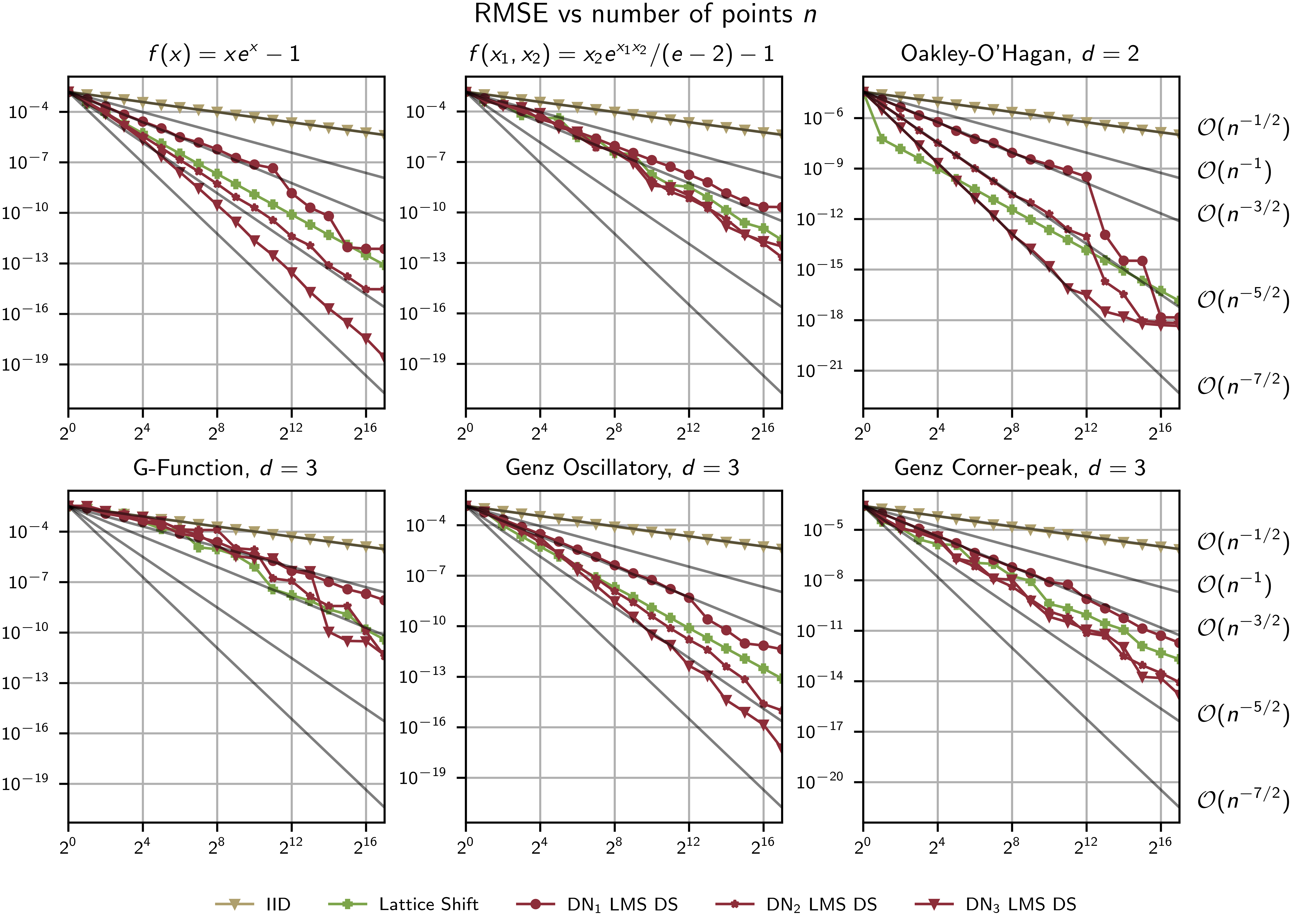}
    \caption{The RMSE of the RQMC estimate for a few different integrands.}
    \label{fig:convergence}
\end{figure}

In the following code, we use \texttt{QMCPy} to build an RQMC estimate to the mean of the Corner-peak Genz function in $d=50$ dimensions. Using $R$ replications and a fixed number of points $n$, we construct a Student's-$t$ confidence interval as in \eqref{eq:RQMC} and the discussion thereafter. \texttt{SciPy} is used to compute the necessary quantile. Listing \ref{code:Genz_ex_2} gives an example of an adaptive version of this algorithm which automatically selects the number of points $n$ required to meet a user-specified absolute error tolerance.

\lstinputlisting[style=Python, caption={Randomized QMC approximation and error estimation for a fixed number of points.}, label={code:Genz_ex_1}]{genz_ex_1.py}

\section{Conclusion} \label{sec:conclusion}

This work has reviewed our  \texttt{QMCPy} implementations of routines for generating randomized LD sequences and applying them to fast kernel methods. We presented lattices, digital nets, and Halton point sets with randomizations spanning shifts, digital permutations, digital shifts, LMS, and NUS. Higher-order scramblings for digital nets were also considered, including higher-order NUS and LMS with interlacing. We described classes of kernels which pair with LD sequences to enable fast kernel computations. These included SI and DSI kernels of higher-order smoothness. The fast kernel methods utilized the fast transform algorithms FFTBR, IFFTBR, and FWHT. We showed our implementations in \texttt{QMCPy} achieve state-of-the-art speed, accuracy, and flexibility.

\section*{Acknowledgements}
    Thank you to Fred J. Hickernell, Sou-Cheng T. Choi, and Aadit Jain for helpful comments and prototyping. Thank you to the reviewers for helpful feedback and suggestions. 

\bibliographystyle{elsarticle-harv} 
\bibliography{main}

\appendix 

\section{Extra code snippets} \label{appendix:codes}

\lstinputlisting[style=Python, caption={Efficient Gram matrix operations and transform updates for the DSI-digital net variant.}, label={code:DSI_dnet}]{dsi_dnet.py}

\lstinputlisting[style=Python, caption={Adaptive randomized QMC approximation and error estimation.}, label={code:Genz_ex_2}]{genz_ex_2.py}

\section{Proofs of theorems} \label{appendix:proofs}

For $k \in \bbN_0$ write 
$$\hf(k) := \int_0^1 f(x) \overline{\wal_k(x)} \D x.$$

\begin{lemma}[Walsh coefficients of low order monomials]\label{lemma:walsh_low_order_monomials}
    Fix $b=2$.  Let $f_p(x) := x^p$. When $k \in \bbN$ write 
    $$k = 2^{a_1}+\dots+2^{a_{\#k}}$$
    where $a_1 > a_2 > \dots > a_{\#k} \geq 0$. Then we have 
    \begin{align*}
        \hf_1(k) &= \begin{cases} 1/2, & k = 0 \\ -2^{-a_1-2}, & k=2^{a_1} \\ 0, & \mathrm{otherwise} \end{cases}, \\
        \hf_2(k) &= \begin{cases} 1/3, & k = 0 \\ -2^{-a_1-2}, & k=2^{a_1} \\ 2^{-a_1-a_2-3}, & k = 2^{a_1}+2^{a_2} \\ 0, & \mathrm{otherwise} \end{cases}, \\
        \hf_3(k) &= \begin{cases} 1/4, & k=0 \\ -2^{-a_1-2} + 2^{-3a_1-5}, & k = 2^{a_1} \\ 3 * 2^{-a_1-a_2-4}, & k=2^{a_1}+2^{a_2} \\ -3 * 2^{-a_1-a_2-a_3-5}, & k=2^{a_1}+2^{a_2}+2^{a_3} \\ 0, & \mathrm{otherwise} \end{cases}.
    \end{align*}
\end{lemma}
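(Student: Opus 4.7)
The plan is to evaluate each Walsh--Fourier coefficient $\hf_p(k)=\int_0^1 x^p \wal_k(x)\,\D x$ (noting $\overline{\wal_k}=\wal_k$ in base~$2$) directly from the piecewise-constant structure of $\wal_k$. Writing $k = 2^{a_1}+\cdots+2^{a_{\#k}}$ with $a_1>\cdots\ge 0$ and setting $N:=a_1+1$, $\wal_k$ is constant on each of the $2^N$ dyadic intervals $I_m := [m/2^N,(m+1)/2^N)$; using the paper's definition of $\wal_k$ together with $m=\sum_{b=0}^{N-1}\mm_b 2^b$ one checks that $\wal_k\vert_{I_m} = \sigma_k(m) := \prod_{\ell=1}^{\#k} (-1)^{\mm_{a_1-a_\ell}}$. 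Substituting $x=(m+u)/2^N$ on $I_m$ and expanding $(m+u)^p$ binomially reduces the integral to
$$\hf_p(k) \;=\; 2^{-N(p+1)} \sum_{j=0}^{p} \frac{\binom{p}{j}}{j+1}\, S_{p-j}(k),\qquad S_q(k) \;:=\; \sum_{m=0}^{2^N-1}\sigma_k(m)\,m^q,$$
so the whole lemma reduces to evaluating the character sums $S_q(k)$ for $q\in\{0,1,2,3\}$.

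The second step is the vanishing argument. Expanding $m^q$ as a polynomial in its bits $\mm_b$ (using $\mm_b^2=\mm_b$) writes $S_q(k)$ as a linear combination, indexed by subsets $T\subseteq\{0,\dots,N-1\}$ with $\lvert T\rvert\le q$, of factorized bit-sums $\prod_b \bigl(\sum_{\mm_b\in\{0,1\}}\dots\bigr)$. The four elementary evaluations $\sum_\mm 1 = 2$, $\sum_\mm\mm = 1$, $\sum_\mm(-1)^\mm = 0$, $\sum_\mm\mm(-1)^\mm = -1$ make the factor at bit $b$ vanish whenever $b\in\{a_1-a_\ell\}$ but $b\notin T$; hence the contribution from $T$ survives only when $T \supseteq\{a_1-a_\ell:1\le\ell\le\#k\}$. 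This forces $S_q(k)=0$ whenever $q<\#k$ and in one stroke produces every ``otherwise $=0$'' line of the lemma.

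For the nonzero cases I would enumerate the admissible $T$'s, read off their bit-monomial weights from the expansion of $m^q$, and evaluate the resulting geometric sums in the ``free'' bit-indices. The cases $p\in\{1,2\}$ and $(p,\#k)\in\{(3,2),(3,3)\}$ are short: for each, only the ``top'' bit-monomial in $m^q$ admits a valid $T$, producing a single geometric sum that collapses at once to the claimed expression. A useful sanity check is that $p=2,\#k=1$ gives $\hf_2(2^{a_1}) = -2^{-a_1-2}$ via $S_1(2^{a_1}) = -2^{a_1}$ and $S_2(2^{a_1}) = -2^{a_1}(2^{a_1+1}-1)$.

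The main obstacle, and the genuinely laborious case, is $p=3$, $\#k=1$: here all three strata of
$$m^3 \;=\; \sum_i \mm_i\, 8^i \;+\; 3\sum_{i\ne j}\mm_i\mm_j\, 4^i\, 2^j \;+\; 6\sum_{i<j<k}\mm_i\mm_j\mm_k\, 2^{i+j+k}$$
contribute to $S_3(2^{a_1})$, each producing its own geometric sum over the $N-1$ free bits. After assembling $S_3,S_2,S_1$ via the Step~1 formula, the leading $2^{2N}$ and $2^N$ contributions must cancel exactly to leave the dominant $-2^{-a_1-2}$ term together with the unusual subleading correction $+2^{-3a_1-5}$; keeping all $O(N)$ pieces intact through that cancellation is the delicate part of the proof.
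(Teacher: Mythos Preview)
Your approach is correct and genuinely different from the paper's. The paper does not compute the Walsh coefficients from scratch: for $p\in\{1,2\}$ it simply cites \citep[Examples~14.2--14.3]{dick.digital_nets_sequences_book}, and for $p=3$ it uses Fine's antiderivative formula
\[
J_k(x)=\int_0^x\wal_k(t)\,\D t = 2^{-a_1-2}\Bigl[\wal_{k'}(x)-\sum_{r\ge1}2^{-r}\wal_{2^{a_1+r}+k}(x)\Bigr]
\]
together with integration by parts to obtain the recursion $\hf_3(k)=-3\cdot 2^{-a_1-2}\bigl[\hf_2(k')-\sum_{r\ge1}2^{-r}\hf_2(2^{a_1+r}+k)\bigr]$, after which each case $\#k\in\{1,2,3\}$ falls out in one or two lines from the already-known $\hf_2$ table. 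Your route instead reduces everything to the character sums $S_q(k)$ and evaluates them by expanding $m^q$ over the bits of $m$; the vanishing argument via $T\supseteq\{a_1-a_\ell\}$ is clean and gives all the zero cases at once.

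What each approach buys: the paper's integration-by-parts recursion is structurally nicer---it turns the hardest case $(p,\#k)=(3,1)$ into a one-line geometric series $\sum_{r\ge1}2^{-r}\hf_2(2^{a_1+r+a_1})$, with no delicate cancellations, and it generalizes immediately to higher $p$. Your direct method is more self-contained (no need for Fine's formula or the cited examples) and makes the combinatorial reason for the vanishing transparent, but as you correctly anticipate, the $(3,1)$ case requires tracking three strata of bit-monomials through a cancellation of the leading $2^{2N}$ and $2^N$ terms. A minor point: in your expansion of $m^3$ you reuse $k$ as a summation index, which clashes with the Walsh index $k$; rename it before writing up.
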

\begin{proof}
    The forms for $\hf_1$ and $\hf_2$ were derived by \citet[Example 14.2, Example 14.3]{dick.digital_nets_sequences_book}. 
    For $k=0$ and any $x \in [0,1)$ we have $\wal_0(x) = 1$, so  
    $$\hf_3(x) = \int_0^1 x^3 \D x = 1/4.$$ 
    Assume $k \in \bbN$ going forward. For $k=2^{a_1}+k'$ with $0 \leq k' < 2^{a_1}$, \citet[Equation 3.6]{fine.walsh_functions} implies
    $$J_k(x) := \int_0^x \wal_k(t) \D t = 2^{-a_1-2} \left[\wal_{k'}(x) - \sum_{r=1}^\infty 2^{-r} \wal_{2^{a_1+r}+k}(x)\right].$$
    Using integration by parts and the fact that $J_k(0) = J_k(1) = 0$,
    \begin{align*}
        \hf_3(k) &= \int_0^1 x^3 \wal_k(x) \D x 
        = \left[x^3 J_k(x) \right]_{x=0}^{x=1} - 3 \int_0^1 x^2 J_k(x) \D x \\
        &= -3*2^{-a_1-2} \left[\hf_2(k') - \sum_{r=1}^\infty 2^{-r} \hf_2(2^{a_1+r}+k)\right].
    \end{align*}
    \begin{itemize}
        \item If $\#k=1$, i.e., $k=2^{a_1}$ then 
        \begin{align*}
            \hf_3(k) &= -3*2^{-a_1-2} \left[\hf_2(0) - \sum_{r=1}^\infty 2^{-r} \hf_2(2^{a_1+r}+2^{a_1})\right] \\
            &= -3*2^{-a_1-2} \left[\frac{1}{3} - \sum_{r=1}^\infty 2^{-r} 2^{-(a_1+r)-a_1-3}\right] \\
            &= 2^{-3a_1-5} - 2^{-a_1-2}.
        \end{align*}
        \item If $\#k=2$ then 
        $$\hf_3(k) = -3*2^{-a_1-2} \hf_2(2^{a_2}) = 3 * 2^{-a_1-a_2-4}.$$
        \item If $\#k=3$ then 
        $$\hf_3(k) = -3*2^{-a_1-2} \hf_2(2^{a_2}+2^{a_3}) = -3 * 2^{-a_1-a_2-a_3-5}.$$
        \item If $\#k>3$ then $\hf_3(k)=0$.
    \end{itemize}
\end{proof}

\begin{proof}[Proof of \Cref{thm:explicit_DSI_low_order_forms}]
    Write  
    $$K_\alpha(x) = \sum_{1 \leq \nu < \alpha} s_\nu(x) + \ts_\alpha(x)$$  
    where $s_\nu$ sums over all $k$ with $\#k = \nu$ and $\ts_\alpha$ sums over all $k$ with $\#k \geq \alpha$. \citet[Corollary 1]{baldeaux.polylat_efficient_comp_worse_case_error_cbc} showed that 
    \begin{align*}
        s_1(x) &= -2x+1, \\
        s_2(x) &= 2x^2 - 2x + \frac{1}{3}, \\
        \ts_2(x) &=  \left[2-\beta(x)\right]x + \frac{1}{2}\left[1-5t_1(x)\right], \\
        \ts_3(x) &= -\left[2-\beta(x)\right] x^2 - \left[1-5t_1(x)\right]x + \frac{1}{18}\left[1-43t_2(x)\right]
    \end{align*}
    from which $K_2$ and $K_3$ follow. We now find expressions for $s_3$ and $\ts_4$ from which $K_4$ follows. 

    Assume sums over $a_i$ are over $\bbN_0$ unless otherwise restricted. \Cref{lemma:walsh_low_order_monomials} gives
    \begin{align*}
        x &= \frac{1}{2} - \sum_{a_1} \frac{\wal_{2^{a_1}}(x)}{2^{a_1+2}}, \\
        x^2 &= \frac{1}{3} - \sum_{a_1}\frac{\wal_{2^{a_1}}(x)}{2^{a_1+2}} + \sum_{a_1>a_2} \frac{\wal_{2^{a_1}+2^{a_2}}(x)}{2^{a_1+a_2+3}}, \\
        x^3 &= \frac{1}{4} - \sum_{a_1} \frac{\wal_{2^{a_1}}(x)}{2^{a_1+2}} + \frac{3}{2}\sum_{a_1>a_2} \frac{\wal_{2^{a_1}+2^{a_2}}(x)}{2^{a_1+a_2+3}} - \frac{3}{2} \sum_{a_1>a_2>a_3} \frac{\wal_{2^{a_1}+2^{a_2}+2^{a_3}}(x)}{2^{a_1+a_2+a_3+4}} + \sum_{a_1} \frac{\wal_{2^{a_1}}(x)}{2^{3a_1+5}}
    \end{align*}
    so
    $$x^3-\frac{3}{2}x^2+\frac{1}{2}x = \frac{1}{32} \sum_{a_1} \frac{\wal_{2^{a_1}}(x)}{2^{3a_1}} - \frac{3}{4} \sum_{a_1>a_2>a_3} \frac{\wal_{2^{a_1}+2^{a_2}+2^{a_3}}(x)}{2^{a_1+a_2+a_3+3}}$$
    and 
    $$s_3(x) = \sum_{a_1>a_2>a_3} \frac{\wal_{2^{a_1}+2^{a_2}+2^{a_3}}(x)}{2^{a_1+a_2+a_3+3}} = -\frac{4}{3} x^3+ 2x^2 -\frac{2}{3}x + \frac{1}{24} \sum_{a_1} \frac{\wal_{2^{a_1}}(x)}{2^{3a_1}}.$$
    Now,
    \begin{align*}
        \ts_4(x) &= \sum_{\substack{a_1>a_2>a_3>a_4 \\ 0 \leq k < 2^{a_4}}} \frac{\wal_{2^{a_1}+2^{a_2}+2^{a_3}+2^{a_4}+k}(x)}{2^{a_1+a_2+a_3+a_4+4}} \\
        &= \sum_{a_1>a_2>a_3>a_4} \frac{\wal_{2^{a_1}+2^{a_2}+2^{a_3}+2^{a_4}}(x)}{2^{a_1+a_2+a_3+a_4+4}} \sum_{0 \leq k < 2^{a_4}} \wal_k(x).
    \end{align*}
    If $x=0$, then 
    $$\ts_4(0) = \sum_{a_1 > a_2 > a_3 > a_4} \frac{1}{2^{a_1+a_2+a_3+4}} = \frac{1}{294}.$$ 
    Going forward, assume $x \in (0,1)$ so $\beta(x) = - \lfloor \log_2(x) \rfloor$ is finite. Recall that  
    $$\sum_{0 \leq k < 2^{a_4}} \wal_k(x) = \begin{cases} 2^{a_4}, & a_4 \leq \beta(x)-1 \\ 0, & a_4 > \beta(x)-1 \end{cases}.$$
    Moreover, since $\beta(x)$ is the index of the first $1$ in the base-$2$ expansion of $x$, when $a_4 < \beta(x)-1$ we have $\wal_{2^{a_4}}(x) = (-1)^{\mx_{a_4+1}} = 1$ and when $a_4 = \beta(x)-1$ we have $\wal_{2^{a_4}}(x) = -1$. This implies 
    \begin{align*}
        \ts_4(x) &= \sum_{\substack{a_1>a_2>a_3>a_4 \\ \beta(x)-1 \geq a_4}} \frac{\wal_{2^{a_1}+2^{a_2}+2^{a_3}+2^{a_4}}(x)}{2^{a_1+a_2+a_3+4}} \\
        &= \sum_{\substack{a_1>a_2>a_3>a_4 \\ \beta(x)-1 > a_4}} \frac{\wal_{2^{a_1}+2^{a_2}+2^{a_3}}(x)}{2^{a_1+a_2+a_3+4}} - \sum_{a_1>a_2 > a_3 > \beta(x)-1} \frac{\wal_{2^{a_1}+2^{a_2}+2^{a_3}}(x)}{2^{a_1+a_2+a_3+4}} \\
        &=: T_1 - T_2.
    \end{align*}
    
    The first term is 
    \begin{align*}
        T_1 &= \sum_{\beta(x)-1 > a_4} \bigg(\sum_{a_1>a_2>a_3} \frac{\wal_{2^{a_1}+2^{a_2}+2^{a_3}}(x)}{2^{a_1+a_2+a_3+4}} -  \sum_{a_4  \geq a_1>a_2>a_3} \frac{1}{2^{a_1+a_2+a_3+4}} \\ & \qquad\qquad  - \sum_{a_1 > a_4 \geq a_2 > a_3} \frac{\wal_{2^{a_1}}(x)}{2^{a_1+a_2+a_3+4}} - \sum_{a_1 > a_2 > a_4 \geq a_3} \frac{\wal_{2^{a_1}+2^{a_2}}(x)}{2^{a_1+a_2+a_3+4}} 
        \bigg) \\
        &=: \sum_{\beta(x)-1 > a_4} \left[V_1(a_4)-V_2(a_4) - V_3(a_4) - V_4(a_4) \right].
    \end{align*}
    Clearly $V_1(a_4) = s_3(x)/2$ and $V_2$ is easily computed. Now
    \begin{align*}
        V_3(a_4) &= \left(\sum_{a_4 \geq a_2 > a_3} \frac{1}{2^{a_2+a_3+3}}\right)\left(\sum_{a_1>a_4} \frac{\wal_{2^{a_1}}(x)}{2^{a_1+1}}\right) \\
        &= \left(\sum_{a_4 \geq a_2 > a_3} \frac{1}{2^{a_2+a_3+3}}\right)\left(s_1(x) - \sum_{a_4 \geq a_1} \frac{1}{2^{a_1+1}} \right)
    \end{align*}
    and 
    \begin{align*}
        V_4(a_4) & = \left(\sum_{a_4 \geq a_3} \frac{1}{2^{a_3+2}}\right)\left(\sum_{a_1>a_2 > a_4} \frac{\wal_{2^{a_1}+2^{a_2}}(x)}{2^{a_1+a_2+2}}\right) \\
        &= \left(\sum_{a_4 \geq a_3} \frac{1}{2^{a_3+2}}\right)\left(\sum_{a_1>a_2} \frac{\wal_{2^{a_1}+2^{a_2}}(x)}{2^{a_1+a_2+2}} - \sum_{a_4 \geq a_1 > a_2} \frac{1}{2^{a_1+a_2+2}} - \sum_{a_1 > a_4 \geq a_2} \frac{\wal_{2^{a_1}}(x)}{2^{a_1+a_2+2}}\right) \\
        &= \left(\sum_{a_4 \geq a_3} \frac{1}{2^{a_3+2}}\right)\left(s_2(x) - \sum_{a_4 \geq a_1 > a_2} \frac{1}{2^{a_1+a_2+2}} - \left(s_1(x) - \sum_{a_4 \geq a_1} \frac{1}{2^{a_1+1}} \right)\left(\sum_{a_4 \geq a_2} \frac{1}{2^{a_2+1}}\right)\right).
    \end{align*}
    
    The second term is 
    \begin{align*}
        T_2 &= \sum_{a_1>a_2>a_3} \frac{\wal_{2^{a_1}+2^{a_2}+2^{a_3}}(x)}{2^{a_1+a_2+a_3+4}} - \sum_{\beta(x)-1 > a_1 > a_2 > a_3} \frac{1}{2^{a_1+a_2+a_3+4}} + \sum_{\beta(x)-1 > a_2 > a_3} \frac{1}{2^{\beta(x)+a_2+a_3+3}} \\ & - \sum_{a_1 > \beta(x)-1 > a_2 > a_3} \frac{\wal_{2^{a_1}}(x)}{2^{a_1+a_2+a_3+4}} + \sum_{a_1 > \beta(x)-1 > a_3}  \frac{\wal_{2^{a_1}}(x)}{2^{a_1+\beta(x)+a_3+3}} \\ &- \sum_{a_1 > a_2 > \beta(x)-1 > a_3}\frac{\wal_{2^{a_1}+2^{a_2}}(x)}{2^{a_1+a_2+a_3+4}} + \sum_{a_1 > a_2 > \beta(x)-1} \frac{\wal_{2^{a_1}+2^{a_2}}(x)}{2^{a_1+a_2+\beta(x)+3}} \\
        &=: W_1-W_2+W_3-W_4+W_5-W_6+W_7.
    \end{align*}
    Clearly $W_1 = s_3(x)/2$ and both $W_2$ and $W_3$ are easily computed. Similarity in the next two sums gives
    \begin{align*}
        W_5 - W_4 &= \left(\sum_{\beta(x)-1 > a_3}\frac{1}{2^{\beta(x)+a_3+2}}-\sum_{\beta(x)-1>a_2>a_3} \frac{1}{2^{a_2+a_3+3}}\right) \left(\sum_{a_1 > \beta(x)-1} \frac{\wal_{2^{a_1}}(x)}{2^{a_1+1}} \right) \\
        &= \left(\sum_{\beta(x)-1 > a_3}\frac{1}{2^{\beta(x)+a_3+2}}-\sum_{\beta(x)-1>a_2>a_3} \frac{1}{2^{a_2+a_3+3}}\right)\left(s_1(x) - \sum_{\beta(x)-1 > a_1} \frac{1}{2^{a_1+1}} + \frac{1}{2^{\beta(x)}}\right)
    \end{align*}
    Similarity in the final two sums gives  
    \begin{align*}
        W_7 - W_6 &= \left(\frac{1}{2^{\beta(x)+1}}-\sum_{\beta(x)-1 > a_3} \frac{1}{2^{a_3+2}}\right) \left(\sum_{a_1>a_2 > \beta(x)-1} \frac{\wal_{2^{a_1}+2^{a_2}}(x)}{2^{a_1+a_2+2}}\right)
    \end{align*}
    where 
    \begin{align*}
        \sum_{a_1>a_2 > \beta(x)-1} \frac{\wal_{2^{a_1}+2^{a_2}}(x)}{2^{a_1+a_2+2}} &= \sum_{a_1>a_2} \frac{\wal_{2^{a_1}+2^{a_2}}(x)}{2^{a_1+a_2+2}} - \sum_{\beta(x)-1 > a_1 > a_2} \frac{1}{2^{a_1+a_2+2}} + \sum_{\beta(x)-1 > a_2} \frac{1}{2^{\beta(x)+a_2+1}} \\ &- \sum_{a_1 > \beta(x)-1 > a_2} \frac{\wal_{2^{a_1}}(x)}{2^{a_1+a_2+2}} + \sum_{a_1 > \beta(x)-1} \frac{\wal_{2^{a_1}}(x)}{2^{a_1+\beta(x)+1}} \\
        &= s_2(x) - \sum_{\beta(x)-1 > a_1 > a_2} \frac{1}{2^{a_1+a_2+2}} + \sum_{\beta(x)-1 > a_2} \frac{1}{2^{\beta(x)+a_2+1}}\\
        &+ \left(\frac{1}{2^{\beta(x)}} - \sum_{\beta(x)-1 > a_2} \frac{1}{2^{a_2+1}}\right)\left(s_1(x) - \sum_{\beta(x)-1 > a_1} \frac{1}{2^{a_1+1}} + \frac{1}{2^{\beta(x)}}\right).
    \end{align*}
    This implies 
    \begin{align*}
        \ts_4(x) &=  \frac{2}{3}\left(2-\beta(x)\right)x^3 + \left(1-5\ t_1(x)\right)x^2 - \frac{1}{9} \left(1 - 43 t_2(x)\right)x \\
        &-\frac{1}{48} \left(2-\beta(x)\right)\sum_{a_1} \frac{\wal_{2^{a_1}}(x)}{2^{3a_1}} -\frac{1}{294} \left(7\beta(x)+701 t_3(x)\right) +\frac{5}{98},
    \end{align*}
    from which the result follows.
\end{proof}



\end{document}